\newtheorem{theorem}{Theorem}
\newtheorem{lemma}[theorem]{Lemma}
\newtheorem{corollary}[theorem]{Corollary}
\newcommand{\nats}{\mathbb{N}}
\newcommand{\pnats}{\nats_{>0}}
\newcommand{\reals}{\mathbb{R}}
\newcommand{\bs}[1]{\boldsymbol{#1}}
\newcommand{\C}{\mathcal{C}}
\newcommand{\ds}{\displaystyle}
\newcommand{\eqdef}{\stackrel{\textrm{def}}{=}}
\newcommand{\yhat}{\hat y}
\newcommand{\Hol}{\mathrm{Holant}}
\newcommand{\PlHol}{\mathrm{Pl\mbox{-}Holant}}
\newcommand{\numP}{\mathrm{\#P}}
\newcommand{\sigmahat}{\sigma}
\newcommand{\sigmatilde}{\tilde\sigma}
\newcommand{\tautilde}{\tilde\tau}
\newcommand{\what}{\hat w}
\newcommand{\calE}{\mathcal{E}}
\newcommand{\hol}{\mathcal{H}}
\newcommand\calD{\mathcal{D}}
\newcommand\pitilde{\tilde{\pi}}
\newcommand{\rhotilde}{\tilde{\rho}}
\newcommand{\Ptilde}{\widetilde{P}}
\newcommand{\Qtilde}{\widetilde{Q}}
\let\rho=\varrho
\let\epsilon=\varepsilon
\title{Polynomial-time approximation algorithms\\ for the antiferromagnetic Ising model on line graphs}
\author{
Martin Dyer\thanks{Work supported by
    EPSRC grants EP/S016562/1 and EP/S016694/1, ``Sampling in hereditary classes''.}\\
\small School of Computing\\[-0.5ex]
\small University of Leeds\\[-0.5ex]
\small Leeds LS2~9JT, UK\\[-0.5ex]
\small\texttt{m.e.dyer@leeds.ac.uk}\\
\and Marc Heinrich\footnotemark[1]\\
\small School of Computing\\[-0.5ex]
\small University of Leeds\\[-0.5ex]
\small Leeds LS2~9JT, UK\\[-0.5ex]
\small\texttt{m.heinrich@leeds.ac.uk}
\and Mark Jerrum\footnotemark[1]\\
\small School of Mathematical Sciences\\[-0.5ex]
\small Queen Mary University of London\\[-0.5ex]
\small Mile End Road, London E1 4NS, UK\\[-0.5ex]
\small\texttt{m.jerrum@qmul.ac.uk}
\and Haiko M\"{u}ller\footnotemark[1]\\
\small School of Computing\\[-0.5ex]
\small University of Leeds\\[-0.5ex]
\small Leeds LS2~9JT, UK\\[-0.5ex]
\small\texttt{h.muller@leeds.ac.uk}
}
\begin{document}

\maketitle

\begin{abstract}
We present a polynomial-time Markov chain Monte Carlo algorithm for estimating the partition function of the  antiferromagnetic Ising model on any line graph.  The analysis of the algorithm exploits the ``winding'' technology devised by McQuillan [CoRR abs/1301.2880 (2013)] and developed by Huang, Lu and Zhang [Proc.\ 27th Symp.\ on Disc.\ Algorithms (SODA16), 514--527].  We show that exact computation of the partition function is \#P-hard, even for line graphs, indicating that an approximation algorithm is the best that can be expected.  We also show that Glauber dynamics for the Ising model is rapidly mixing on line graphs, an example being the kagome lattice. 
\end{abstract}

\section{Introduction}

In statistical mechanics, the Ising model was proposed by Lenz in 1920 as a mathematical model of ferromagnetism.  The model is defined in terms of simple local interactions between sites representing atoms.  Despite the simplicity of its description, the model exhibits complex behaviours, which have been hard to unravel.  Indeed several years were to pass before the most fundamental property of the Ising model was established, namely that small changes in the strength of the local interactions could lead to dramatic changes in macroscopic behaviour.  The model has remained an active object of study ever since, with applications to statistical inference and image analysis helping to sustain interest~\cite{WJ08}.  An accessible account of the Ising model can be found in Friedli and Velenik's book~\cite[\S3]{FriedliVelenik}.

For $\beta,\nu\in\reals$, the partition function of the Ising model on a (simple, undirected) graph~$\Gamma$ is defined by
\begin{align}
Z_{\beta,\nu}(\Gamma)&=\sum_{\sigma:V(\Gamma)\to\{0,1\}}w(\sigma)\label{eq:Isingpf}\\
\intertext{where}
w(\sigma)&=\prod_{\{i,j\}\in E(\Gamma)}\exp\big(\beta\,|\sigma(i)-\sigma(j)|\big)\prod_{k\in V(\Gamma)}\exp(\nu\sigma(k)).\notag
\end{align}
The partition function is a sum over \emph{spin configurations} $\sigma:V(\Gamma)\to\{0,1\}$, which assign 0/1 ``spins'' to the vertices of~$\Gamma$.  The parameter $\beta$ is an ``interaction energy'' that controls the strength of the interaction between spins at adjacent vertices.  The parameter $\nu$ indicates the strength of the ``external field'' that biases the spins at individual vertices.  The partition function is the normalising factor in a probability distribution $\mathcal{D}_{\Gamma,\beta,\nu}$, the \emph{Gibbs distribution}, that assigns weight $\mathcal{D}_{\Gamma,\beta,\nu}(\sigma)=w(\sigma)/Z_{\beta,\nu}(\Gamma)$ to spin configuration~$\sigma$.
More generally, the uniform interaction energy~$\beta$ may be replaced by individual interaction energies $\beta_{i,j}$ for each edge $\{i,j\}\in E(\Gamma)$, and the uniform field strength $\nu$ by individual field strengths~$\nu_k$ for each vertex $k\in V(\Gamma)$.  For our main result, it is crucial that the interaction energies are uniform over edges, but uniformity of the field is inessential.  For simplicity, we carry through the calculation for the uniform case, and return to varying fields towards the end, in Section~\ref{sec:pos}.  

In this article, we analyse the computational complexity of evaluating the partition function~\eqref{eq:Isingpf} when $\Gamma$ is a line graph and $\beta>0$.  We also address the related problem of sampling configurations according to the Gibbs distribution $\mathcal{D}_{\Gamma,\beta,\nu}$.  The class of line graphs, which is well-studied in graph theory, will be defined presently.  When $\beta>0$ the spins $\sigma(i)$ and $\sigma(j)$ associated with adjacent vertices $i$ and~$j$ tend to differ;  that is, configurations $\sigma:V(\Gamma)\to\{0,1\}$ with $\sigma(i)\not=\sigma(j)$ make a greater contribution to the sum~\eqref{eq:Isingpf}.  The qualifier \emph{antiferromagnetic} is applied to this situation, while \emph{ferromagnetic} describes the case $\beta<0$.\footnote{This notation is slightly nonstandard, in that our interaction energy $\beta$ has the opposite sign to usual, but this convention becomes convenient later on.}  When $\beta=0$ the spins are probabilistically independent and the model is trivial.

We now briefly describe the context for our investigation.  Exact evaluation of $Z_{\beta,\nu}(\Gamma)$ is generally \#P-hard, and hence almost certainly computationally intractable \cite[Thm 15]{JerSinIsing}.  The main exception is when the graph $\Gamma$ is restricted to be planar and the external field is absent ($\nu=0$), a situation which is covered by a classical algorithm due to Fisher~\cite{Fisher} and Kasteleyn~\cite{Kasteleyn}.  (Planarity of $\Gamma$ can be relaxed somewhat, to the condition that $\Gamma$ admits a Pfaffian orientation.)  

Turning to approximate computation, Jerrum and Sinclair~\cite{JerSinIsing} present an efficient algorithm to estimate (in the sense of Fully Polynomial Randomised Approximation Scheme or FPRAS) the partition function $Z_{\beta,\nu}(\Gamma)$ for general graphs~$\Gamma$ in the ferromagnetic case, even when $\nu\not=0$.  This algorithm extends to the situation of varying fields, provided that the sign of $\nu_k$ is consistent, but breaks down if the sign varies.  In general, the partition function is hard to approximate in the antiferromagnetic case.  The reason, intuitively, is that the ground states of the model, i.e., the configurations of greatest weight, correspond to maximum cardinality cuts in~$\Gamma$.  Thus, an efficient approximation algorithm for the partition function would most likely yield a polynomial-time algorithm for finding a maximum cut in a graph, which is an NP-hard problem~\cite{GJS76}.  So, to find tractable classes of instances in the antiferromagnetic situation, we need to focus on graph classes in which a maximum cardinality cut can be found in polynomial time.  One obvious example is the class of bipartite graphs.  Graphs in this class are easily dealt with:  just invert the sign of~$\beta$ and flip the interpretation of the spins on one side of the bipartition to yield an equivalent ferromagnetic Ising model.  For this we have an FPRAS, as mentioned earlier.  

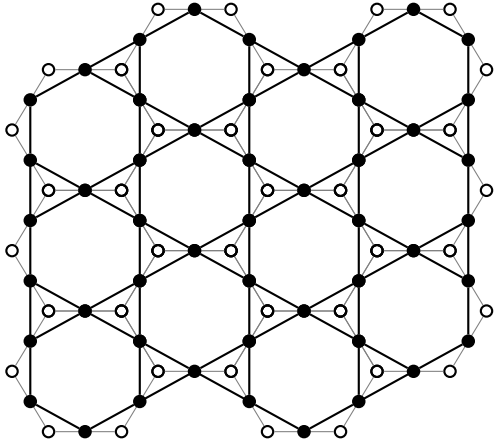
\begin{figure}[t]
\centering
\begin{tikzpicture}[xscale=0.08, yscale=0.08, inner sep=1.5pt, >=stealth]
\tikzset{hex/.style={circle,thick,draw}}
\tikzset{kag/.style={circle,thick,draw,fill}}

\foreach \x in {0,36}
\foreach \y in {0,20,40} {
   \draw (\x,\y) + (0,0)    node[hex] (v1) {};
   \draw (\x,\y) + (6,10)   node[hex] (v2) {};
   \draw (\x,\y) + (18,10)  node[hex] (v3) {};
   \draw (\x,\y) + (24,0)   node[hex] (v4) {};
   \draw (\x,\y) + (18,-10) node[hex] (v5) {};
   \draw (\x,\y) + (6,-10)  node[hex] (v6) {};

   \draw[gray] (v1) -- (v2) -- (v3) -- (v4) -- (v5) -- (v6) -- (v1);
   
   \draw (\x,\y) + (3,5)    node[kag] (v12) {};
   \draw (\x,\y) + (12,10)  node[kag] (v23) {};
   \draw (\x,\y) + (21,5)   node[kag] (v34) {};
   \draw (\x,\y) + (21,-5)  node[kag] (v45) {};
   \draw (\x,\y) + (12,-10) node[kag] (v56) {};
   \draw (\x,\y) + (3,-5)   node[kag] (v61) {};

   \draw[thick] (v12) -- (v23) -- (v34) -- (v45) -- (v56) -- (v61) -- (v12);
}

\foreach \x in {18,54}
\foreach \y in {10,30,50} {
   \draw (\x,\y) + (0,0)    node[hex] (v1) {};
   \draw (\x,\y) + (6,10)   node[hex] (v2) {};
   \draw (\x,\y) + (18,10)  node[hex] (v3) {};
   \draw (\x,\y) + (24,0)   node[hex] (v4) {};
   \draw (\x,\y) + (18,-10) node[hex] (v5) {};
   \draw (\x,\y) + (6,-10)  node[hex] (v6) {};

   \draw[gray] (v1) -- (v2) -- (v3) -- (v4) -- (v5) -- (v6) -- (v1);
   
   \draw (\x,\y) + (3,5)    node[kag] (v12) {};
   \draw (\x,\y) + (12,10)  node[kag] (v23) {};
   \draw (\x,\y) + (21,5)   node[kag] (v34) {};
   \draw (\x,\y) + (21,-5)  node[kag] (v45) {};
   \draw (\x,\y) + (12,-10) node[kag] (v56) {};
   \draw (\x,\y) + (3,-5)   node[kag] (v61) {};

   \draw[thick] (v12) -- (v23) -- (v34) -- (v45) -- (v56) -- (v61) -- (v12);
}

\draw[thick] (12,50) -- (21,55);
\draw[thick] (39,55)-- (48,50) -- (57,55);

\draw[thick] (21,-5) -- (30,0) -- (39,-5);
\draw[thick] (57,-5) -- (66,0);

\draw[thick] (3,5) -- (3,15);
\draw[thick] (3,25) -- (3,35);

\draw[thick] (75,15) -- (75,25);
\draw[thick] (75,35) -- (75,45);

\end{tikzpicture}

\caption{A portion of the hexagonal lattice (open vertices/thin grey edges) and its line graph, the kagome lattice (filled vertices/thick black edges)}
\label{fig:kagome}
\end{figure}

Here, we focus on another graph class for which maximum cardinality cut is polynomial-time solvable, namely the class of line graphs.  The \emph{line graph} $L(\Gamma)$ associated with an underlying graph $\Gamma=(V(\Gamma),E(\Gamma))$ is the graph with vertex set $V(L(\Gamma))=E(\Gamma)$ and edge set 
$$E(L(\Gamma))=\big\{\{e,f\}:e,f\in E(\Gamma)\text{ and }e\cap f\not=\emptyset\big\}.$$
Thus, there is an edge between two vertices in the line graph $L(\Gamma)$ whenever the corresponding edges in~$\Gamma$ share an endpoint.  See Figure~\ref{fig:kagome} for an example graph and its associated line graph.  That a maximum cardinality cut can found in polynomial time in line graphs was shown by Arbib~\cite{Arbib88};  see also Guruswami~\cite{Guruswami}.  Our main contribution (Theorem~\ref{thm:FPRAS}) is an efficient approximation algorithm for estimating the partition function $Z_{\beta,\nu}(\Gamma)$ in the antiferromagnetic case, when $\Gamma$ is a line graph.  The algorithm is efficient in the sense of Fully Polynomial Randomised Approximation Scheme (FPRAS).  An FPRAS is a randomised algorithm that produces an output that is within ratio $1\pm\epsilon$ of the desired solution, with high probability, and within time that is bounded by a polynomial in the instance size~$n$ and $1/\epsilon$.  See \cite[Defn~11.2]{MotRag} for details.  Along the way, we also find an efficient algorithm (Theorem~\ref{thm:FPAUS}) for sampling configurations in the same setting.  In contrast to the case of bipartite graphs, the reason for tractability of these two problems, counting and sampling, is non-trivial.  Our precise results are the following.   

\begin{theorem}\label{thm:FPRAS}
Suppose $\beta>0$ (the antiferromagnetic case) and $\nu\in\reals$. 
There is an FPRAS that, given a line graph $L(\Gamma)$, estimates the partition function $Z_{\beta,\nu}(L(\Gamma))$ of the Ising model~\eqref{eq:Isingpf}.
\end{theorem}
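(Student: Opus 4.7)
The plan is to reformulate the partition function on the line graph as a Holant-type sum on the underlying graph $\Gamma$, and then design a Markov chain whose rapid mixing can be analysed via the winding / canonical-paths method of McQuillan and Huang-Lu-Zhang.

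First, using the identification $V(L(\Gamma)) = E(\Gamma)$, a spin configuration $\sigma : V(L(\Gamma)) \to \{0,1\}$ is precisely an edge 2-colouring $c : E(\Gamma) \to \{0,1\}$. Two vertices of $L(\Gamma)$ are adjacent exactly when the corresponding edges of $\Gamma$ share an endpoint, so the Ising weight factorises vertex by vertex in $\Gamma$: at a vertex $v$ of degree $d$ with $k$ incident edges of colour~$0$ and $d-k$ of colour~$1$, the interaction term contributes $\exp(\beta k(d-k))$, while the field contributes a per-edge weight $\exp(\nu c(e))$. Thus $Z_{\beta,\nu}(L(\Gamma))$ is the value of a Holant problem on $\Gamma$ with symmetric, log-concave vertex signatures $g_d(k) = \exp(\beta k(d-k))$.

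Next I would consider the natural single-site Markov chain on edge 2-colourings of $\Gamma$, which coincides with heat-bath Glauber dynamics on $L(\Gamma)$ and has stationary distribution $\mathcal{D}_{L(\Gamma),\beta,\nu}$. To bound its mixing time I would construct a multicommodity flow in the winding framework: given a source colouring $c$ and a target $c'$, route unit flow from $c$ to $c'$ along canonical paths obtained by resolving the symmetric difference $c \oplus c'$ via an Eulerian decomposition into paths and cycles, flipping one edge at a time. The key combinatorial estimate is a congestion bound on this flow that exploits the log-concavity of the signatures $g_d$; this yields polynomial mixing, and hence Theorem~\ref{thm:FPAUS}. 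The FPRAS of Theorem~\ref{thm:FPRAS} then follows by the standard simulated-annealing and self-reducibility arguments: interpolate $\beta$ from $0$ (where the partition function reduces to a trivial product measure) to its target value, and estimate each ratio $Z_{\beta_i,\nu}/Z_{\beta_{i-1},\nu}$ unbiasedly from samples produced by the FPAUS.

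The main obstacle is the congestion analysis. One must verify that the winding bookkeeping yields polynomial congestion uniformly across the vertex signatures $g_d$ that arise in $\Gamma$, for every degree $d$ and every $\beta > 0$. Any loss of log-concavity, or degree-dependent blow-up in the number of flow-paths routed through a high-degree vertex, would break the argument, so careful control of the per-vertex contribution to congestion — together with a choice of Eulerian ordering and flipping schedule that keeps partial colourings highly weighted along the canonical paths — is where the bulk of the technical work is expected to lie.
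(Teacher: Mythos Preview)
Your overall plan---reformulate $Z_{\beta,\nu}(L(\Gamma))$ as a holant on $\Gamma$, apply the winding technology of McQuillan and Huang--Lu--Zhang to obtain a polynomial-time sampler, then convert to an FPRAS via simulated annealing in $\beta$---is exactly the route the paper takes, and your identification of the per-vertex congestion analysis as the crux is correct.

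There is, however, a genuine gap at that crux. You say the congestion bound ``exploits the log-concavity of the signatures $g_d$'', but log-concavity is not the operative condition in the winding framework, and is not known to imply what is actually needed. The HLZ machinery requires each vertex signature to be \emph{windable}: by their characterisation, for every pinning $G$ of the signature, the self-complementary function $H(x)=G(x)G(\bar x)$ must have its half-vector $\bs h$ expressible as $A_m\bs x$ with $\bs x\geq\mathbf 0$, where $A_m$ is a specific combinatorial matrix. This is a much sharper and less obvious requirement than log-concavity, and verifying it for the infinite family $g_d(k)=\exp(\beta k(d-k))$, uniformly in $d$, is the paper's main technical contribution. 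The argument goes via a closure property of these particular signatures under pinning (every $H$ arising this way is a scalar multiple of $F_{2\beta,m}$ for some $m$), followed by a cone argument showing that the relevant half-vectors lie in the cone spanned by certain binomial-coefficient vectors, which in turn lies inside the cone of columns of~$B_m$. None of this is captured by log-concavity alone.

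A secondary point: the Markov chain that the HLZ black box directly analyses is not single-site Glauber dynamics on $\Omega_0=\{0,1\}^{E(\Gamma)}$ but a half-edge chain on the enlarged space $\Omega_0\cup\Omega_2$ of consistent and nearly-consistent half-edge configurations. Rapid mixing of Glauber dynamics can be recovered afterwards by a comparison argument, but this is an additional step, and the $O(m^2)$ bound on $\hol_2(\Gamma)/\hol_0(\Gamma)$ is needed both for the mixing-time bound and to ensure that samples land in~$\Omega_0$ with good probability.
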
 

\begin{theorem}\label{thm:FPAUS}
Suppose $\beta>0$ (the antiferromagnetic case) and $\nu\in\reals$. 
There is an algorithm that, given a line graph $L(\Gamma)$ and tolerance $\delta>0$, produces a spin configuration $\sigma$ from a distribution that is within variation distance $\delta$ of the Gibbs distribution $\mathcal{D}_{L(\Gamma),\beta,\nu}$.  The running time of the algorithm is polynomial in $\log\delta^{-1}$ and the instance size.
\end{theorem}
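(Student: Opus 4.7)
The plan is to obtain Theorem~\ref{thm:FPAUS} by running, for sufficiently many steps, the Markov chain that underlies the MCMC proof of Theorem~\ref{thm:FPRAS}.  That chain~$M$ acts on some finite state space~$\Omega$, and the analysis supporting the FPRAS furnishes a polynomial bound $T_{\mathrm{mix}}(1/4)\le\mathrm{poly}(n)$ on its mixing time.  Two cases arise.  If $\Omega$ is already the set of spin configurations on $L(\Gamma)$, with stationary distribution $\calD_{L(\Gamma),\beta,\nu}$, then~$M$ itself is the sampler.  If instead, as is typical when the winding technology of McQuillan and of Huang--Lu--Zhang is used, $\Omega$ consists of auxiliary combinatorial objects (such as subgraphs of a derived graph), there is a deterministic, polynomial-time computable projection $\varphi\colon\Omega\to\{0,1\}^{V(L(\Gamma))}$ whose pushforward of the $M$-stationary measure equals $\calD_{L(\Gamma),\beta,\nu}$; the sampler then applies~$\varphi$ to the chain's final state.

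The quantitative step is the standard ``$\log\delta^{-1}$ doublings'' trick.  By submultiplicativity of total variation distance along a Markov chain, running~$M$ from any initial state for $T_{\mathrm{mix}}(1/4)\bigl\lceil\log_2\delta^{-1}\bigr\rceil$ steps produces a random state whose distribution is within variation distance $\delta$ of the stationary distribution of~$M$, and post-composition with the deterministic map~$\varphi$ preserves this bound.  Since $T_{\mathrm{mix}}(1/4)$ is polynomial in the instance size, the total running time is polynomial in the instance size and in $\log\delta^{-1}$, as Theorem~\ref{thm:FPAUS} demands.

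The main (and essentially only) point to verify, in the case where auxiliary objects are used, is that the pushforward of the stationary distribution under~$\varphi$ is exactly $\calD_{L(\Gamma),\beta,\nu}$: for every spin configuration~$\sigma$, the $M$-weights of the elements of the fibre $\varphi^{-1}(\sigma)$ must sum to a quantity proportional to the Gibbs weight $w(\sigma)$.  This is precisely the identity that underlies the use of~$M$ to estimate $Z_{\beta,\nu}(L(\Gamma))$ in Theorem~\ref{thm:FPRAS}, so no additional combinatorics is required; once that identity is in hand, Theorem~\ref{thm:FPAUS} follows at once.
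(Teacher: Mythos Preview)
Your proposal has two genuine gaps.

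First, the logical dependency is reversed.  In the paper, Theorem~\ref{thm:FPAUS} is established first, and Theorem~\ref{thm:FPRAS} is then derived from it (via a telescoping product of partition-function ratios).  So appealing to ``the Markov chain that underlies the MCMC proof of Theorem~\ref{thm:FPRAS}'' and to ``the analysis supporting the FPRAS'' is circular.

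Second, and more substantively, your pushforward hypothesis fails for the actual chain.  The winding Markov chain of Huang--Lu--Zhang runs on the enlarged state space $\Omega=\Omega_0\cup\Omega_2$ of consistent \emph{and nearly consistent} half-edge configurations.  Its stationary distribution $\pi$ is \emph{not} a pushforward of~$\calD_{L(\Gamma),\beta,\nu}$ under any deterministic map $\varphi:\Omega\to\{0,1\}^{V(L(\Gamma))}$: the states in~$\Omega_2$ carry positive $\pi$-mass but have no canonical spin-configuration image, and any choice of image distorts the distribution.  What is true is that $\pi$ \emph{conditioned on}~$\Omega_0$ equals~$\calD_{L(\Gamma),\beta,\nu}$ (under the obvious bijection).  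To turn this into a sampler you need the additional ingredient you omit, namely Lemma~\ref{lem:ratio}: the ratio $\hol_2(\Gamma)/\hol_0(\Gamma)=O(m^2)$ guarantees $\pi(\Omega_0)^{-1}=O(m^2)$, so after mixing one lands in~$\Omega_0$ with probability bounded away from zero and rejection is efficient.  The paper's proof packages exactly these two inputs---windability (Theorem~\ref{thm:main}) for rapid mixing, and Lemma~\ref{lem:ratio} for $\pi(\Omega_0)^{-1}=O(m^2)$---and invokes \cite[Lemma~29]{HLZ16} as a black box.
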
 

One consequence of our results (see Section~\ref{sec:pos} for details) is that Glauber dynamics for the antiferromagnetic Ising model on a line graph is rapidly mixing.  Glauber dynamics moves around the space of configurations by making single spin updates according to a simple probabilistic rule.  Its unique stationary distribution is the Gibbs distribution defined earlier.  The phrase ``rapidly mixing'' indicates that the dynamics converges close to the Gibbs distribution in a number of updates that is polynomial in the size of the graph.  In a recent paper, Chen, Liu and Vigoda~\cite{chen2020rapid} show that Glauber dynamics mixes rapidly up to a natural threshold for~$\beta$, depending on the maximum degree of the graph, beyond which it was already known that the mixing time is exponential.  Here we show that, by suitably restricting the underlying graph, we can have rapid mixing for all~$\beta$. 

Rapid mixing at all $\beta$ is perhaps a surprising phenomenon in the context of the Ising model, and has an interesting relationship to phase transitions in infinite systems.  As a concrete example, take the kagome lattice, which is the line-graph of the familiar hexagonal lattice.  Figure~\ref{fig:kagome} depicts a portion of the kagome lattice superimposed on the underlying hexagonal lattice.  Consider an $L\times L$ patch of the kagome lattice with top and bottom, and left and right side identified (toroidal or periodic boundary conditions).  This finite graph is a line graph, so Glauber dynamics mixes in polynomial time on this graph.  Heuristically, this ought to imply that the antiferromagnetic Ising model on the infinite kagome lattice does not exhibit a phase transition.  In other words, as the interaction strength~$\beta$ is continuously varied, no dramatic change occurs in the typical macroscopic configurations.  Indeed Sy\^{o}zi~\cite{Syozi} has shown that the antiferromagnetic Ising model on the kagome lattice does not exhibit a phase transition at any $\beta>0$ (which is taken to mean that a certain observable called the specific heat is continuous in~$\beta$).  This is another example of the emerging connection between algorithmic tractability of finite systems and absence of a phase transition in the corresponding infinite system.  We leave the connection to phase transitions as a heuristic observation as there are currently no rigorous results in this direction.  Indeed, Goldberg, Martin and Paterson~\cite{GMP} show polynomial-time mixing of Glauber dynamics on 3-colourings of finite portions of the square lattice, and yet the infinite lattice has ``frozen'' colourings from which no transition is possible.  

We note that the kagome lattice is not the only line graph that has been studied in the context of statistical physics.  Another example, this time 3-dimensional, is the pyrochlore lattice, which is the line graph of a degree-4 graph:  see Jur\v{c}i\v{s}inov\'{a} and Jur\v{c}i\v{s}in~\cite{JJ20}.
     
As the kagome lattice exhibits a phase transition in the ferromagnetic situation, we cannot expect Glauber dynamics to be rapidly mixing for all $\beta<0$, even on line graphs. However, our approach does show rapid mixing for $\beta_c\leq\beta\leq0$, for some $\beta_c<0$ depending on the maximum degree~$\Delta$ of~$\Gamma$. This is so at least for small $\Delta$ by direct computation, and probably for all $\Delta$, though we have not shown this. See Section~\ref{sec:pos}.  

It is natural to question whether \emph{approximation} algorithms are necessary in the context of line graphs. Lemma~\ref{lem:zerofield} and Lemma~\ref{lem:planar} in Section~\ref{sec:neg} can be interpreted as asserting that we must settle for approximate results, assuming $\mathrm{RP}\not=\mathrm{NP}$: computing the partition function \emph{exactly} is hard even in the case of zero external field (Lemma~\ref{lem:zerofield}), or in the case of a planar graph (Lemma~\ref{lem:planar}).  As we have noted, imposing both conditions simultaneously leads to tractability.  The Ising model on the kagome lattice is exactly solvable~\cite{Syozi} with zero external field, but in the presence of an external field it seems we have to settle for the approximate solution promised by Theorem~\ref{thm:FPRAS}.

Finally, we can speculate on the existence of other graph classes on which the antiferromagnetic Ising model can be solved approximately.  The maximum cut problem is known to be NP-complete on many graph classes.  For example, Bodlaender and Jansen~\cite{BJ00} show that maximum cut is NP-hard on several graph classes including chordal graphs and tripartite graphs.  For such classes, it is unlikely that the antiferromagnetic Ising model is tractable, even in the approximate sense.  Indeed, whenever maximum cut cannot be solved to within a constant factor in polynomial time (under some complexity theoretic assumption) this heuristic claim becomes rigorous.   The general argument is given, in the context of maximum independent set, by Luby and Vigoda~\cite{LV99}.

Very recently, Bencs, Csikv\'ari and Regts~\cite{BCR20}, using a completely different approach to the one described here, gave a \emph{deterministic} algorithm for approximating the partition function of an antiferromagnetic model on a line graph.  

\section{Overview of the approach}\label{sec:overview}

We use an off-the-shelf analytical technique, known as ``winding'', devised by McQuillan~\cite{McQuil13} and developed and systematised by Huang, Lu and Zhang~\cite{HLZ16}.  As far as the proofs in this article are concerned, we can treat this technique for the most part as a black box.  Nevertheless, we do have to peer a little way inside the black box.  We'll  cover enough of the method for our current needs and leave the interested reader to consult the original sources for details.

The first step is to change our viewpoint away from vertex configurations towards edge configurations.  Recall that the spin configurations of the Ising model we are working with live on the \emph{vertices} of a line graph $L(\Gamma)$.  These configurations can equally be thought of as spin configurations living on the \emph{edges} of~$\Gamma$.  Such a spin model is called a ``holant'' in the computer science literature~\cite{CaiChenBook}.  In the holant view, the partition function of the Ising model has the form
\begin{equation}\label{eq:holant}
Z_{\beta,\nu}(L(\Gamma))=\sum_{\sigma:E(\Gamma)\to\{0,1\}}\,\prod_{k\in V(\Gamma)}f_k(\sigma|_{E(k)}),
\end{equation}
where $\sigma|_{E(k)}$ denotes the restriction of $\sigma$ to the edges incident at vertex~$k$.  The functions $\{f_k\}$ are specific to the Ising model and depend on the vertex degrees;  they will be derived in the next section.

The basic approach is Markov chain Monte Carlo (MCMC).  As usual, we concentrate on solving the sampling problem (Theorem~\ref{thm:FPAUS}) first, as it is a short step from there to a solution of the estimation problem (Theorem~\ref{thm:FPRAS}).  Our first thought would be to construct a Markov chain whose state space is the set of configurations $\sigma:E(\Gamma)\to\{0,1\}$.  However, the winding approach, as developed in~\cite{HLZ16}, requires us to expand the state space somewhat.  Think of each edge $\{i,j\}$ of $\Gamma$ as being subdivided into two \emph{half-edges} one incident at~$i$ and the other at~$j$.  Denote by $\calE=\calE(\Gamma)$ the set of all half-edges of~$\Gamma$.  Now consider an enlarged set of configurations $\sigmahat:\calE\to\{0,1\}$.  We say that a configuration $\sigmahat$ is \emph{consistent} if, for every edge, the two half-edges that compose it are assigned the same spin.  Let $\Omega_0$ denote the set of all consistent configurations, and define
$$
\hol_0(\Gamma)=\hol_0(\Gamma;\{f_k\})
=\sum_{\sigmahat\in\Omega_0}\,\prod_{k\in V(\Gamma)}f_k(\sigmahat|_{\calE(k)}),
$$
where $\sigmahat|_{\calE(k)}$ denotes the restriction of $\sigmahat$ to the half-edges incident at vertex~$k$.  Clearly $\hol_0(\Gamma)=Z_{\beta,\nu}(L(\Gamma))$.  
We say that a configuration $\sigmahat$ is \emph{nearly consistent} if it is consistent except for precisely two edges.  Denote by $\Omega_2$ be the set of all nearly consistent configurations, and define
$$
\hol_2(\Gamma)=\hol_2(\Gamma;\{f_k\})=
\sum_{\sigmahat\in\Omega_2}\,\prod_{k\in V(\Gamma)}f_k(\sigmahat|_{\calE(k)}).
$$
Nearly consistent configurations are needed in general to connect the state space of the Markov chain from~\cite{HLZ16} that we are about to define, but could be avoided in this specific application:  see Section~\ref{sec:Glauber}. 

We now consider a Markov chain on state space $\Omega=\Omega_0\cup\Omega_2$ in which a transition is available between every pair of configurations that differ on exactly two half-edges.   It is a simple matter to choose transition probabilities so that, in the stationary distribution, configuration~$\sigmahat$ occurs with probability proportional to $\what(\sigmahat)$, where 
$$
\what(\sigmahat)=\prod_{k\in V(\Gamma)}f_k(\sigmahat|_{\calE(k)}).
$$
Specifically, Huang et al.\ use transition probabilities $P:\Omega^2\to[0,1]$ based on the Metropolis filter.  Denote by $d(\sigmahat,\sigmahat')$ the number of half-edges on which $\sigmahat$ and $\sigmahat'$ differ, and let $m=|E(\Gamma)|$.  Then 
\begin{equation}\label{eq:Metropolis}
P(\sigmahat,\sigmahat')=\begin{cases}
\frac1{4m^2}\min\left\{\frac{\what(\sigmahat')}{\what(\sigmahat)},1\right\},&\text{if $d(\sigmahat,\sigmahat')=2$};\\
0,&\text{if $d(\sigmahat,\sigmahat')\notin\{0,2\}$},
\end{cases}
\end{equation}
and the loop probabilities $P(\sigmahat,\sigmahat)$ are chosen to satisfy $\sum_{\sigmahat'\in\Omega}P(\sigmahat,\sigmahat')=1$.
It may be verified that $P(\sigma,\sigma)>\frac12$;  this ``laziness'' property is convenient in the analysis of mixing time.\footnote{Note that the ``proposal probability'' $1/4m^2$ here is smaller than the one appearing in~\cite{HLZ16}, which is itself too large to guarantee $P(\sigmahat,\sigmahat)\geq0$.}  It is an immediate consequence of the Metropolis transition rule that the stationary distribution assigns probability proportional to $\what(\sigma)$ to each state $\sigma\in\Omega$. 

Note that, in particular, the stationary distribution, restricted to $\Omega_0$, is the Gibbs distribution $\mathcal{D}_{\Gamma,\beta,\nu}$ (under the obvious bijection between $\Omega_0$ and assignments $E(\Gamma)\to\{0,1\}$).  So we will have a polynomial-time approximate sampler for $\mathcal{D}_{\Gamma,\beta,\nu}$ if (a)~the Markov chain is rapidly mixing, i.e., comes close to stationarity in polynomially many steps, and (b)~the ratio $\hol_2(\Gamma)/\hol_0(\Gamma)$ is polynomially bounded.  Condition~(b) ensures that, when we stop the Markov chain after sufficiently many steps, we have a good chance of being at a state in~$\Omega_0$.  (Intriguingly, the ratio $\hol_2(\Gamma)/\hol_0(\Gamma)$ also plays a critical role in bounding the mixing time.)

Condition (b) can be swiftly dispensed with.

\begin{lemma}\label{lem:ratio}
$\hol_2(\Gamma)/\hol_0(\Gamma)=O(m^2)$, where $m=|E(\Gamma)|$.  The implicit constant in the O-notation depends on $\beta$, $\nu$ and the maximum degree $\Delta$ of\,~$\Gamma$.
\end{lemma}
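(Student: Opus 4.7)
The plan is to construct a map $\Phi:\Omega_2\to\Omega_0$ with two properties: (a) each fibre $\Phi^{-1}(\tau)$ has size $O(m^2)$, uniformly in $\tau\in\Omega_0$; and (b) the weight ratio $\what(\sigmahat)/\what(\Phi(\sigmahat))$ is bounded above by a constant $C=C(\beta,\nu,\Delta)$. Given such a $\Phi$, a routine swap of summation gives
\begin{align*}
\hol_2(\Gamma)=\sum_{\sigmahat\in\Omega_2}\what(\sigmahat)
\leq C\sum_{\sigmahat\in\Omega_2}\what(\Phi(\sigmahat))
=C\sum_{\tau\in\Omega_0}|\Phi^{-1}(\tau)|\,\what(\tau)
=O(m^2)\,\hol_0(\Gamma),
\end{align*}
which is the claim.

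To construct $\Phi$, fix an arbitrary total order on $V(\Gamma)$. For $\sigmahat\in\Omega_2$ with broken edges $e_1,e_2$, let $\Phi(\sigmahat)$ be the configuration obtained by flipping, for each $i\in\{1,2\}$, the half-edge of $e_i$ incident at its lower-indexed endpoint, so that the two half-edges of $e_i$ now agree. Clearly $\Phi(\sigmahat)\in\Omega_0$. For property (a), observe that a preimage of $\tau\in\Omega_0$ is uniquely determined by the unordered pair $\{e_1,e_2\}$ of edges that become broken: starting from $\tau$ and flipping the half-edge at the lower endpoint of each $e_i$ gives a unique $\sigmahat\in\Omega_2$ satisfying $\Phi(\sigmahat)=\tau$. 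Hence $|\Phi^{-1}(\tau)|\leq\binom{m}{2}$.

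For property (b), $\sigmahat$ and $\Phi(\sigmahat)$ differ on at most two half-edges, each incident at a single vertex of $\Gamma$, so at most two of the vertex factors $f_k$ change in passing from $\sigmahat$ to $\Phi(\sigmahat)$. It therefore suffices to bound $\max f_k/\min f_k$ by a constant depending only on $\beta,\nu,\Delta$. Once $f_k$ is derived in the next section, it will appear as a product of at most $\binom{d_k}{2}\leq\binom{\Delta}{2}$ pair-interaction factors of the form $\exp(\beta|\,\cdot\,|)$ together with at most $d_k\leq\Delta$ field factors of the form $\exp(\frac{\nu}{2}\cdot)$, so that $|\log f_k|\leq\beta\binom{\Delta}{2}+|\nu|\Delta/2$ holds uniformly over its inputs. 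This furnishes the required constant $C$ and completes the proof.

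The only real obstacle is the bookkeeping for property (b): verifying a uniform upper and lower bound on the values of $f_k$. Once the explicit form of $f_k$ is in hand, however, this reduces to reading off the exponents, so we expect no surprises there.
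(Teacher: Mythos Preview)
Your proof is correct and follows essentially the same approach as the paper: map each $\sigmahat\in\Omega_2$ to $\Omega_0$ by flipping two half-edges, bound the weight change by a constant depending on $\beta,\nu,\Delta$, and count preimages. Your version is slightly more explicit---you fix a canonical choice of which half-edge to flip via the vertex ordering, yielding $\binom{m}{2}$ preimages rather than the paper's looser $\binom{2m}{2}$---but the argument is the same.
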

\begin{proof}
Given a configuration $\sigmahat_2\in\Omega_2$, we can reach a configuration $\sigmahat_0\in\Omega_0$ by flipping the spin on two half-edges.  This operation changes the weight by a constant factor, i.e., $\what(\sigmahat_0)\geq C\what(\sigmahat_2)$ where $C$ depends on $\beta$, $\nu$ and $\Delta$.  Going backwards, there are $\binom{2m}2\leq2m^2$ half-edges on which we could choose to flip the spin.  
\end{proof}

To prove rapid mixing of the Markov chain we use the ``canonical paths'' method or, more accurately, its generalisation to ``multi-commodity flow''~\cite{flow}.  The key parameter of interest is the so-called congestion of the flow, as the mixing time scales linearly with congestion.  Generally speaking, the construction of appropriate canonical paths and the analysis of their congestion requires expertise and insight.  However, for the holant setting with 0/1 spins, this forbidding step can to a large extent be reduced to linear algebra by the winding technique.  In this approach, the selection of canonical paths follows automatically from the condition of ``windability'' of the functions~$f_k$.  The concept of windability of functions will be discussed a little later.  Then, the conductance of this set of canonical paths can be bounded in terms of the ratio appearing in Lemma~\ref{lem:ratio}.  Thus, in light of that lemma, the canonical paths method and associated concept of conductance can also be treated as a black box.

In principle, windability of any given function can be tested, at least for reasonable arities, with the aid of a computer algebra system.  The main obstacle we face in the current application is the need to demonstrate that the particular functions arising from the antiferromagnetic Ising model are windable for all arities (i.e., for all vertex degrees).  As will become apparent, overcoming this obstacle presents a challenge.  There are few natural infinite families of functions that are known to be uniformly windable, possibly just the examples given by McQuillan~\cite{McQuil13}, namely parity (Even and Odd) and not-all-equal (NAE).  So it is significant that the Ising model has this property.  

\section{The Ising model and holants}\label{sec:holants}

In this section we formalise the intuition that the partition function $Z_{\beta,\nu}(L(\Gamma))$ of the Ising model on a line graph $L(\Gamma)$ can be expressed as a holant on the underlying graph~$\Gamma$.

Denote the nonnegative real numbers by $\reals^+$ and the positive integers by $\pnats$. 
A symmetric function $F : \{0,1\}^d \rightarrow \reals^+$ of arity~$d$ has an associated \emph{signature vector} $[f_0, \ldots f_d]$ if $F(x) = f_i$ when $x$ has exactly $i$~coordinates set to~$1$. If $F$ is also \emph{self-complementary}, that is,  $F(x) = F(\bar x)$ for all $x$, then we have $f_i = f_{d-i}$. In that case, the vector $\bs{h}=[f_0, \ldots, f_{\lfloor d/2\rfloor}]$ completely characterises~$F$, and will be called the \emph{half-vector} of $F$. Henceforward, all functions $F$ will be symmetric.

For $\beta,\mu\in\reals$ and $d\in\pnats$, denote by $F_{\beta,\mu, d} : \{0,1\}^{d}\rightarrow \reals^+$ the function with signature vector $[e^{\beta i(d-i)+\mu i}:0\leq i \leq d]$. We will write $F_{\beta, d}$ for $F_{\beta,0, d}$. Note that $F_{\beta, d}$ is self-complementary, but $F_{\beta,\mu, d}$ is not, if $\mu\neq 0$.  
In the derivation below, we show that $Z_{\beta,\nu}(L(\Gamma))$ can be expressed in the form~\eqref{eq:holant}, in which, for each vertex $k$, we set $f_k=F_{\beta,\mu,d(k)}$, where $\mu=\nu/2$ and $d(k)$ is the degree of~$k$.
In what follows, $E(\Gamma)^{(2)}$ denotes the set of unordered pairs of edges of~$\Gamma$.
\begin{align*}
Z_{\beta,\nu}(L(\Gamma))&=\sum_{\sigma:V(L(\Gamma))\to\{0,1\}}\,\,\prod_{\{i,j\}\in E(L(\Gamma))}\exp\big(\beta\,|\sigma(i)-\sigma(j)|\big)\prod_{k\in V(L(\Gamma))}\exp(\nu\sigma(k))\\
&=\sum_{\sigma:E(\Gamma)\to\{0,1\}}\,\,\prod_{\substack{\{e,f\}\in E(\Gamma)^{(2)}\\e\cap f\not=\emptyset}}\exp\big(\beta\,|\sigma(e)-\sigma(f)|\big)\prod_{e\in E(\Gamma)}\exp(\nu\sigma(e))\displaybreak[1]\\
&=\sum_{\sigma:E(\Gamma)\to\{0,1\}}\,\,\prod_{k\in V(\Gamma)}\,\,\prod_{\substack{\{e,f\}\in E(\Gamma)^{(2)}\\e,f\ni k}}\exp\big(\beta\,|\sigma(e)-\sigma(f)|\big)\prod_{e\in E(\Gamma)}\exp(\nu\sigma(e))\displaybreak[1]\\
&=\sum_{\sigma:E(\Gamma)\to\{0,1\}}\,\,\prod_{k\in V(\Gamma)}F_{\beta,d(k)}\big(\sigma(e_{k,1}),\sigma(e_{k,2}),\ldots,\sigma(e_{k,d(k)})\big)\prod_{e\in E(\Gamma)}\exp(\nu\sigma(e))\\
&=\sum_{\sigma:E(\Gamma)\to\{0,1\}}\,\,\prod_{k\in V(\Gamma)}F_{\beta,\mu,d(k)}\big(\sigma(e_{k,1}),\sigma(e_{k,2}),\ldots,\sigma(e_{k,d(k)})\big),
\end{align*} 
where $e_{k,1},e_{k,2},\ldots,e_{k,d(k)}$ is an enumeration of edges incident at vertex $k$, and $\mu=\nu/2$. 

Comparing with \eqref{eq:holant}, it can be seen that we have successfully expressed the partition function $Z_{\beta,\nu}(L(\Gamma))$ as a holant on~$\Gamma$.  
 
\section{Windability}\label{sec:windability}
This section will be a proof of the following.
\begin{theorem}\label{thm:main}
  The function $F_{\beta,\mu, d}$ is windable for all $\beta\in\reals^+$, $\mu\in\reals$ and $d\in\nats$.
\end{theorem}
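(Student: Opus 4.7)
The plan is to verify the windability condition directly, leveraging the symmetry of $F_{\beta,\mu,d}$ to reduce to a tractable family of algebraic identities, and then exhibiting an explicit nonnegative winding decomposition.

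First I would unpack windability as formulated by Huang, Lu and Zhang~\cite{HLZ16}: a function $f:\{0,1\}^d\to\reals^+$ is windable if for every pair $(x,y)\in\{0,1\}^{2d}$ whose disagreement set $D(x,y)=\{i:x_i\neq y_i\}$ has even cardinality, there exist nonnegative values $B(x,y,M)$ indexed by perfect matchings $M$ of $D(x,y)$, satisfying $f(x)f(y)=\sum_M B(x,y,M)$ together with an invariance relation: for each edge $\{i,j\}\in M$, the value $B$ is unchanged when $x$ and $y$ are swapped on the pair $\{i,j\}$. Because $F_{\beta,\mu,d}$ is symmetric, everything depends only on the multiplicities of the four ``column types'' $(0,0), (0,1), (1,0), (1,1)$ appearing in the pair $(x,y)$, so the task collapses to a finite family of algebraic identities indexed by these multiplicities and the arity $d$.

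The core of the proof would be to propose an explicit form for $B(x,y,M)$ that depends only on the ``type'' of $M$: how many of its edges pair two $(1,0)$ columns, two $(0,1)$ columns, or one column of each. The counts of matchings of each type are elementary, and the invariance relation then turns into a single algebraic identity among the signature values $f_i=e^{\beta i(d-i)+\mu i}$. I expect this identity to emerge from the exponential form of the signature: the ratios $f_{i+1}/f_i$ telescope neatly and should yield a product-form solution that is manifestly nonnegative. The field parameter $\mu$ is likely handled by factoring $e^{\mu i}$ out of the signature and observing that matched swaps within $M$ either preserve the total number of ones (for mixed edges) or redistribute them symmetrically between $x$ and $y$ (for same-type edges), so the extra $e^{\mu\cdot(\text{number of ones})}$ contributions remain consistent with the swap-invariance.

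The main obstacle is producing a single winding scheme that works uniformly in $d$; for any fixed arity the condition is a finite linear-programming feasibility problem, but the family to be handled is infinite. I anticipate that the argument hinges on a log-concavity property of the signature---equivalently, the concavity of $i\mapsto\beta i(d-i)+\mu i$---which should deliver the required positivity. A fallback, should the direct scheme resist explicit construction, would be to realise $F_{\beta,\mu,d}$ as a gadget built from an already-known windable family (for instance by attaching auxiliary half-edges that simulate the field), invoking closure of windability under such gadgets.
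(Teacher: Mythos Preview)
Your outline identifies the right target but does not supply the key construction: you never actually exhibit the values $B(x,y,M)$, only assert that a ``product-form solution that is manifestly nonnegative'' ought to emerge. That is precisely the hard step, and there is no evidence a closed product form exists. Your handling of $\mu$ via swap-symmetry is correct in spirit but stops short of the clean statement that drives the paper's argument; your appeal to log-concavity of $i\mapsto \beta i(d-i)+\mu i$ is suggestive but not by itself enough to produce the required nonnegative decomposition uniformly in~$d$.

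The paper takes a different route. Rather than working from the matching definition, it invokes the Huang--Lu--Zhang characterisation (their Theorem~7, restated here as Theorem~\ref{thm01}): $F$ is windable iff, for every pinning $G$ of arity~$m$, the half-vector $\bs h$ of $H(x)=G(x)G(\bar x)$ satisfies $A_m\bs x=\bs h$ for some $\bs x\geq\mathbf 0$. A short calculation (Lemma~\ref{lem01}) shows that every such $H$ is a positive scalar multiple of $F_{2\beta,m}$---the field $\mu$ cancels completely---so only one self-complementary signature per arity needs checking. The cone membership is then established not by a closed-form $\bs x$ but by expanding $e^{2\beta i(m-i)}$ as a power series in~$\beta$ and showing that each term lies in an explicit sub-cone $D_n\subseteq C_n$ generated by shifted binomial vectors $\bs v_k=\bigl[\binom{m-2k}{i-k}\bigr]_i$ (Lemma~\ref{lem02}). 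The inductive engine is the identity
\[
i(m-i)\,v_{i,k}=(m-2k)(m-2k-1)\,v_{i,k+1}+k(m-k)\,v_{i,k},
\]
which shows $D_n$ is closed under the map $u_i\mapsto i(m-i)u_i$. This algebraic closure property is what replaces your hoped-for product formula, and it is not something that falls out of log-concavity alone. Your fallback gadget idea is not pursued in the paper and would face the same uniformity-in-$d$ obstacle.
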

The definition of windable can be found in~\cite{McQuil13} or~\cite{HLZ16}.  However, a knowledge of the definition is not essential for an understanding of this paper, as we shall be using the characterisation of windable that was given by Huang et al., and that is captured in Theorem~\ref{thm01}.  In order to understand that theorem we need some preliminary definitions.

Suppose $m\in\pnats$ and let $n=\lfloor m/2\rfloor$. Recall from~\cite{HLZ16} the definition of the matrix $A_m \in \mathbb{R}^{(n+1) \times (n+1)}$, whose coefficients $a^{(m)}_{i,j}$ ($0\leq,i,j\leq n$) are, for $m=2n$ even,
\begin{align}\label{eq:Adefn}
	a^{(m)}_{i,j} &=
	\begin{cases}
		\ds {i \choose j} {2n - i \choose j} j! (i-j-1)!! (2n-i-j-1)!! & \text{if } i \equiv j \bmod 2\;,\\[3mm]
		\ds\ 0 & \text{otherwise}\;;
	\end{cases}\\
\intertext{or, for $m=2n+1$ odd,}
\label{eq:Adefn'}
a^{(m)}_{i,j} &=
\begin{cases}
\ds{i \choose j} {2n+1 - i \choose j} j! (i-j-1)!! (2n+1-i-j)!! & \text{if } i \equiv j \bmod 2\:, \\[3mm]
\ds {i \choose j} {2n+1 - i \choose j} j! (i-j)!! (2n-i-j)!! & \text{otherwise} \;,
\end{cases}
\end{align}
where $x!!$ is the double factorial $x!! := x \cdot (x-2) \cdot (x-4) \ldots$ for $x\in\nats$, and we use the convention $(-1)!!=1$.\footnote{The second formula has a typo in~\cite{HLZ16}, and is missing ``!!''.}

The matrix entries $a^{(m)}_{i,j}$ have a combinatorial meaning: the number of ways of pairing $m$ objects of two types, $i$ of the first type and $m-i$ of the second, such that exactly $j$ pairs have both types. There will be one singleton if $m=2n+1$. It follows that $\sum_{j=0}^{i}a^{(m)}_{i,j}$ is simply the total number of ways of pairing $m$ objects, so
\begin{equation}\label{eq:ansum}
\sum_{j=0}^{i}a^{(m)}_{i,j}\ =\ \begin{cases} (2n-1)!! & \text{if }m=2n;\\[3mm]
(2n+1)!! & \text{if }m=2n+1. \end{cases}
\end{equation}

The expressions for $a^{(m)}_{i,j}$ may be rewritten in a more convenient form, which separates, as far as possible, factors depending only on $i$ from those depending only on~$j$.  Starting with definition~\eqref{eq:Adefn} and assuming $i$ and $j$ have the same parity, 
\begin{align*}
a^{(m)}_{i,j}&=(2n)!\,\frac{i!(2n-i)!}{(2n)!}\;\frac1{j!}\;\frac{(i-j-1)!!}{(i-j)!}\;\frac{(2n-i-j-1)!!}{(2n-i-j)!}\\
&=(2n)!{\binom{2n}{i}}^{-1}\frac1{j!(n-j)!}\;\frac{(n-j)!}{2^{(i-j)/2}[(i-j)/2]!\; 2^{n-(i+j)/2}[n-(i+j)/2]!}\\
&=\frac{(2n)!}{2^nn!}{\binom{2n}{i}}^{-1}2^j\binom nj\binom{n-j}{(i-j)/2}.
\end{align*}
Thus we reach the following equivalent form of~\eqref{eq:Adefn}, valid for $m=2n$ even:
 \begin{align}
	a^{(m)}_{i,j}\, &=\,
	\begin{cases}\ds
		(2n-1)!!\,{2n \choose i}^{-1}  2^j{n \choose j}{n - j \choose (i-j)/2} & \text{if } i \equiv j \bmod 2\;,\\[3mm]
		\ 0 & \text{otherwise}\;.
	\end{cases}\label{eq:an1}
\intertext{Starting with \eqref{eq:Adefn'} and performing very similar manipulations we obtain the following, valid for $m=2n+1$ odd:}
a^{(m)}_{i,j}\, &=\,
(2n+1)!!\,{2n+1 \choose i}^{-1}  2^j{n \choose j}{n - j \choose \lfloor(i-j)/2\rfloor} \,. \label{eq:an2}
\end{align}
Clearly $a^{(m)}_{i,j}=0$ for $i<j$, so $A_m$ is lower triangular. (We use the usual convention that ${b\choose a}=0$ if $a<0$ or $b<a$.)

A \emph{pinning} of a function $F:\{0,1\}^d\to\reals^+$ is a function $G:\{0,1\}^m\to\reals^+$ of smaller or equal arity that is obtained from $F$ by setting certain variables to 0 or 1.  If $F$ is symmetric, then the it does not matter which variables are set to 0 (or 1), only their number.  Thus, a pinning of~$F$ may be written 
$$G(x_1,\ldots,x_m)=F(x_1,\ldots,x_m,\underbrace{0,\ldots,0}_{d-b},\underbrace{1,\ldots,1}_a),$$
where $0\leq a\leq b\leq d$ and $m=b-a$.  Note that, if the signature vector of $F$ is $[f_0,\ldots,f_d]$, then that of $G$ is $[f_a,\ldots,f_b]$.
Now, to prove that $F_{\beta, \mu, d}$ is windable, we use the following result from~\cite{HLZ16}.
\begin{theorem}[\cite{HLZ16}, Theorem 7]\label{thm01}
A symmetric function $F : \{0,1\}^d \mapsto \reals^+$ is \emph{windable} if and only if, for every $1\leq m\leq d$ and every pinning $G$ of $F$ with arity $m$, the self-complementary function $H(x) \eqdef G(x)G(\bar{x})$ with half-vector $\bs{h}=[h_0,h_1,\ldots,h_n]$ (where $n=\lfloor m/2\rfloor$) satisfies the following condition: The linear equations $A_m \bs{x} = \bs{h}$ have a nonnegative solution $\bs{x}\in (\reals^+)^{n+1}$.
\end{theorem}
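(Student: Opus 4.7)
The plan is to unpack McQuillan's definition of \emph{windable}: $F$ is windable if for every pair $(x,y)\in\{0,1\}^d\times\{0,1\}^d$ and every perfect matching $M$ on the disagreement set $D(x,y)=\{i:x_i\neq y_i\}$ (one coordinate may be left unmatched when $|D(x,y)|$ is odd) there is a nonnegative weight $B_M(x,y)\geq 0$ such that $F(x)F(y)=\sum_M B_M(x,y)$, and the weights respect a flip-invariance property on the edges of $M$ that generates an equivalence relation on triples $(x,y,M)$. Theorem~\ref{thm01} recasts this combinatorial condition as the nonnegative solvability of a specific finite linear system, one per pinning of $F$, which is testable by computer algebra in small cases.

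For the \emph{only if} direction, I would first observe that windability is inherited by pinnings: restricting the $B_M$-witnesses to the unpinned coordinates (keeping the pinned coordinates fixed at their chosen values) yields a valid witness for the pinned function. So it suffices to verify the identity for every pinning $G$ of arity $m$ by applying the windability of $G$ at the specific pair $(x,\bar x)$, where every coordinate lies in the disagreement set and $M$ is an arbitrary perfect matching of $\{1,\dots,m\}$. Flip-invariance then forces $B_M(x,\bar x)$ to be constant across orbits of the flip action, and a direct check shows that the orbit of $(x,\bar x,M)$ is determined by the number $j$ of \emph{mixed} pairs of $M$ (pairs $\{k,\ell\}$ with $x_k\neq x_\ell$). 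Writing $y_j\geq 0$ for the common value of $B_M(x,\bar x)$ on the orbit indexed by $j$, summing the windability identity at $(x,\bar x)$ over matchings grouped by $j$ yields $h_i=\sum_j a^{(m)}_{i,j}\,y_j$, where $a^{(m)}_{i,j}$ is the number of matchings of $m$ coordinates ($i$ of type $1$, $m-i$ of type $0$) with exactly $j$ mixed pairs. This is the system $A_m\bs{y}=\bs{h}$ with $\bs{y}\geq\bs{0}$.

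For the \emph{if} direction, assume that for every pinning $G$ of $F$ the system $A_m\bs{y}=\bs{h}^{(G)}$ has a nonnegative solution $\bs{y}^{(G)}$. I would construct the windability witnesses for $F$ by reducing a general pair $(x,y)$ to the fully-disagreeing case. Pin $F$ on the agreement set $A(x,y)$ to the common values $x_i=y_i$ to obtain a pinning $G$ of arity $m=|D(x,y)|$, so that $F(x)F(y)=G(x|_D)G(\overline{x|_D})=H(x|_D)$. Set $B_M(x,y):=y^{(G)}_j$ where $j$ is the number of mixed pairs of $M$ with respect to $x|_D$. Then $\sum_M B_M(x,y)=\sum_j a^{(m)}_{i,j}\,y^{(G)}_j=h_i=F(x)F(y)$, and nonnegativity is immediate.

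The main obstacle will be verifying flip-invariance of the constructed $B_M$ witnesses and deriving the combinatorial formulas for $A_m$. Flip-invariance reduces to checking that the flip on an edge of $M$ preserves (i)~the agreement set $A(x,y)$ — and hence the pinning $G$ — and (ii)~the mixed-pair count $j$; both follow from the fact that the flip acts only on a pair of coordinates inside $D(x,y)$ and does not change the type pattern of $M$. The closed forms \eqref{eq:Adefn} and \eqref{eq:Adefn'} for $a^{(m)}_{i,j}$ are obtained by choosing the $j$ mixed pairs in $\binom{i}{j}\binom{m-i}{j}j!$ ways and then pairing the remaining coordinates of each type among themselves (contributing double-factorial factors), with a case split on the parity of $m$ to handle the unmatched coordinate when $m$ is odd — that case explains the asymmetry in \eqref{eq:Adefn'} between the $i\equiv j$ and $i\not\equiv j$ branches.
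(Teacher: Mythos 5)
This theorem is imported verbatim from Huang, Lu and Zhang \cite{HLZ16} (their Theorem~7) and the paper gives no proof of it — windability is deliberately treated as a black box here — so there is no in-paper argument to compare against; your attempt is a from-scratch reconstruction from McQuillan's definition. The architecture you choose is the right one: pinnings inherit windability, the ``only if'' direction is extracted by evaluating the witnesses at the fully-disagreeing pair $(x,\bar x)$ of each pinning, the ``if'' direction rebuilds $B_M(x,y)$ for a general pair by pinning on the agreement set, and your combinatorial derivation of \eqref{eq:Adefn} and \eqref{eq:Adefn'} (choose the $j$ mixed pairs, pair off the remainder of each type, handle the singleton when $m$ is odd) is correct.

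There is, however, a genuine gap in the ``only if'' direction, at the sentence asserting that ``a direct check shows that the orbit of $(x,\bar x,M)$ is determined by the number $j$ of mixed pairs''. A direct check shows the opposite: the flip moves $(x,y,M)\mapsto(x\oplus S,y\oplus S,M)$ with $S\in M$ never alter the matching $M$, and they preserve not merely the cardinality $j$ but the actual \emph{set} $T$ of mixed pairs (and, for odd $m$, the spin at the unmatched coordinate). The orbits are therefore indexed by the data $(M,T)$, which is much finer than $j$; nothing in the definition forces $B$ to agree on two distinct matchings with the same $j$, or on the same matching with two different $j$-subsets of mixed pairs. So your ``common value $y_j$'' is not well defined as stated. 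The standard repair is an averaging step: either symmetrise the witnesses over the action of the symmetric group on coordinates (legitimate since $F$, hence $G$ and $H$, is symmetric, and the two windability conditions are linear and permutation-equivariant), after which flip- plus permutation-invariance does collapse the orbit data to $j$; or, more directly, let $c(M,T)\ge 0$ denote the common value on the orbit $(M,T)$, define $y_j$ as the average of $c(M,T)$ over the $(2n-1)!!\binom{n}{j}$ choices with $|T|=j$ (even case), and note that the number of $x$ with $|x|=i$ realising a given $(M,T)$ is $2^j\binom{n-j}{(i-j)/2}$, which depends only on $i$ and $j$. Summing $h_i=\sum_M B(x,\bar x,M)$ over all $x$ with $|x|=i$ and dividing by $\binom{m}{i}$ then yields $h_i=\sum_j a^{(m)}_{i,j}y_j$ with $y_j\ge 0$, via \eqref{eq:an1}. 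With that step inserted (and its analogue for odd $m$) your argument goes through; the ``if'' direction as you wrote it is fine, since there you only need \emph{some} flip-invariant witness, and assigning $B_M(x,y)$ by the mixed-pair count of $M$ relative to $x|_D$ is manifestly constant on orbits.
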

To apply this, we first determine $H(x)=G(x)G(\bar{x})$ for all pinnings $G$ of $F_{\beta,\mu, d}$. To help guide the reader through the remainder of the section, the naming conventions in Theorem~\ref{thm01} will be maintained throughout.  So, $d$ will be the arity of some function~$F$, integer $m$ with $1\leq m\leq d$ the arity of some self-complementary function~$H$ derived from a pinning of~$F$, and $n=\lfloor m/2\rfloor$.  Thus the indices of the half-vector~$\bs{h}$ representing~$H$ run from 0 to~$n$.

\begin{lemma}\label{lem01}
	Let $d \in\pnats$ and $\beta,\mu\in\reals$, and let $G$ be a pinning of $F_{\beta,\mu, d}$. Define $H$ as in Theorem~\ref{thm01}.  Then, there exists $m\leq d$ and a constant $K \geq 0$ such that $H = K \cdot F_{2\beta, m}$.
\end{lemma}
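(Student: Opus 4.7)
The plan is to compute $H(x) = G(x)G(\bar x)$ directly from the exponential form of $F_{\beta,\mu,d}$ and read off the factorisation. A pinning $G$ of arity~$m$ comes from fixing $a$ coordinates of $F_{\beta,\mu,d}$ to~$1$ and $d-b$ coordinates to~$0$, with $m = b - a$. Then, for $x \in \{0,1\}^m$ with $i$ ones, we have $G(x) = \exp\!\bigl(\beta(a+i)(d-a-i) + \mu(a+i)\bigr)$, and since $\bar x$ has $m - i$ ones (so a total of $a + (m-i) = b-i$ ones after unpinning), $G(\bar x) = \exp\!\bigl(\beta(b-i)(d-b+i) + \mu(b-i)\bigr)$. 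The two $\mu$-contributions combine into the $i$-independent factor $e^{\mu(a+b)}$, so the only real work is to understand the $\beta$-exponent as a function of~$i$.

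For the $\beta$-exponent, I would expand
\[
(a+i)(d-a-i) + (b-i)(d-b+i) = (a+b)d - (a+i)^2 - (b-i)^2,
\]
and then use $(a+i)^2 + (b-i)^2 = a^2 + b^2 + 2i(a-b) + 2i^2$. Since $m = b-a$, the $i$-dependent part collapses as $-2i(a-b) - 2i^2 = 2i(m-i)$, so altogether
\[
H(x) \;=\; \exp\!\bigl(\beta[(a+b)d - a^2 - b^2] + \mu(a+b)\bigr) \cdot e^{2\beta i(m-i)}.
\]
Taking $K$ to be the (positive) prefactor, this matches exactly $K \cdot F_{2\beta,m}(x)$ using the definition of the signature vector of $F_{2\beta,m}$, and $m \le d$ holds trivially since $m = b - a \le d$.

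There is no real obstacle here: the lemma amounts to completing the square, with the single observation that $-2i(a-b) - 2i^2 = 2i(m-i)$ when $m = b-a$. The payoff is conceptual: Lemma~\ref{lem01} clears away the nuisance parameter~$\mu$ and reduces the windability check for every pinning of $F_{\beta,\mu,d}$ (as demanded by Theorem~\ref{thm01}) to the single question of whether the self-complementary family $\{F_{2\beta,m}\}_{m\ge1}$ satisfies $A_m\bs{x} = \bs{h}$ with $\bs{x} \ge 0$, where $\bs{h}$ is now the half-vector of $F_{2\beta,m}$ itself (up to the harmless scalar~$K$). The heavy lifting is then deferred to the subsequent lemmas, which must exhibit a nonnegative solution using the explicit formulas~\eqref{eq:an1}--\eqref{eq:an2}.
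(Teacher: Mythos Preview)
Your proof is correct and follows essentially the same direct computation as the paper's proof: both set up the pinning with $a$ ones and $d-b$ zeros, multiply $G(x)G(\bar x)$, and simplify the exponent to extract the $2\beta i(m-i)$ term plus an $i$-independent constant. The only cosmetic difference is that the paper expands $(a+i)(d-a-i)+(b-i)(d-b+i)$ directly into $a(d-a)+b(d-b)+2i(b-a-i)$, whereas you route through $(a+b)d-(a+i)^2-(b-i)^2$ first; the resulting constants $K$ are identical.
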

\begin{proof}
	Let $[f_0, \ldots, f_{d}]$ be the signature vector of $F_{\beta,\mu, d}$, where $f_i = e^{\beta i(d-i)+\mu i}$. Let $G$ be a pinning of $F$, with $a$ variables pinned to 1, and $d-b$ to 0. Its signature vector is of the form $[f_i:a \leq i \leq b]$. Let $[z_i:0 \leq i \leq m]$ be the signature vector of~$H$, where $m=b-a\leq d$. Then we have:
	\begin{align*}
		z_i\ &=\ e^{\beta (a+i) (d - a - i) +\mu(a+i)} \cdot e^{\beta(b - i)(d- b + i)+\mu(b-i)}\ =\ e^{Z_i}
	\intertext{where the exponent $Z_i$ is:}
		Z_i\ &=\  \beta\big((a+i) (d - a - i)  + (b - i)(d- b + i)\big) + \mu(a+b)\\
		&=\  \beta(a(d-a)  + b(d-b))  + \mu(a+b) + 2\beta i(b-a-i).
	\intertext{	If $m=b-a$ and $K=e^{\beta(a(d-a)  + b(d-b))  + \mu(a+b)}$, this implies that we have:}
	z_i\ &=\ K e^{2\beta i(m-i)} = K F_{2\beta,m}\,.
	\end{align*}
	Thus $H$ is a positive multiple of the function $F_{2\beta, m}$.
\end{proof}
 
From Theorem~\ref{thm01} and Lemma~\ref{lem01}, it follows that
\begin{corollary}
	\label{cor:wind-2decomp}
	$F_{\beta,\mu, d}$ is windable if and only if for all $1\leq m\leq d$ the following holds:  the system $A_m\bs{x}=\bs{h}$ has a solution with $\bs{x}\geq \mathbf{0}$, where $\bs{h}$ is the half vector representing the function $F_{2\beta,m}$.
\end{corollary}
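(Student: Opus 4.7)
The plan is to derive the corollary directly by stitching Theorem~\ref{thm01} together with Lemma~\ref{lem01}, using only the observation that the constant $K$ produced by Lemma~\ref{lem01} is strictly positive (being an exponential).

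For the forward direction, I would assume $F_{\beta,\mu,d}$ is windable and fix any $m$ with $1\leq m\leq d$. Theorem~\ref{thm01} applies to \emph{every} pinning of $F_{\beta,\mu,d}$, so I am free to choose a convenient one that realises $F_{2\beta,m}$ up to a positive scalar. The natural choice is the pinning with $a=0$ and $b=m$, i.e.\ setting $d-m$ variables to~$0$; Lemma~\ref{lem01} then gives $H=K\cdot F_{2\beta,m}$ with $K=e^{\beta m(d-m)+\mu m}>0$. Theorem~\ref{thm01} supplies a nonnegative $\bs{y}$ with $A_m\bs{y}=K\bs{h}$, where $\bs{h}$ is the half-vector of $F_{2\beta,m}$; dividing by $K>0$ yields the required nonnegative solution of $A_m\bs{x}=\bs{h}$.

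For the reverse direction, assume the system $A_m\bs{x}=\bs{h}$ has a nonnegative solution $\bs{x}^\ast$ for every $1\leq m\leq d$, with $\bs{h}$ the half-vector of $F_{2\beta,m}$. To check windability via Theorem~\ref{thm01}, take an arbitrary pinning $G$ of $F_{\beta,\mu,d}$ of some arity $m\leq d$. Lemma~\ref{lem01} gives $H=K\cdot F_{2\beta,m}$ with $K>0$, so its half-vector is $K\bs{h}$. Scaling by $K$, the vector $K\bs{x}^\ast$ is a nonnegative solution of $A_m(K\bs{x}^\ast)=K\bs{h}$, which is precisely the criterion demanded by Theorem~\ref{thm01} for this pinning. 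Since the pinning was arbitrary, $F_{\beta,\mu,d}$ is windable.

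I do not foresee any genuine obstacle in this corollary: it is a routine scaling argument on top of the two results already in hand, and the only thing requiring any care is confirming that the multiplicative constant from Lemma~\ref{lem01} is never zero (which is automatic from its exponential form). The substantive work — actually exhibiting a nonnegative solution of $A_m\bs{x}=\bs{h}$ for every $m$ — is deferred to the subsequent sections and is where the real difficulty of establishing Theorem~\ref{thm:main} will lie.
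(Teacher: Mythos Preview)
Your argument is correct and is exactly the approach the paper intends: the corollary is stated immediately after Theorem~\ref{thm01} and Lemma~\ref{lem01} with the one-line justification ``From Theorem~\ref{thm01} and Lemma~\ref{lem01}, it follows that\ldots'', and you have simply spelled out that deduction, including the key point that the constant $K$ in Lemma~\ref{lem01} is strictly positive so that scaling preserves nonnegativity in both directions.
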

We can put the equations $A_m\bs{x}=\bs{h}$ in a more convenient form. Let
\begin{equation*} \label{eq:bn0}
\ds	b^{(m)}_{i,j}\ =\ \begin{cases} \ds
		{n - j \choose \lfloor(i-j)/2\rfloor}&
		\ds \text{if\ }d=2n+1 \text{\ or\ }i\equiv j \bmod 2\\[3mm]
	\ 0 & \text{otherwise.}
	\end{cases}
\end{equation*}
It follows from (\ref{eq:an1},\ref{eq:an2}) that the equations $A_m\bs{x}=\bs{h}$ can be rewritten in the form
\begin{equation}\label{eq:bnsum}
\sum_{j=0}^n b^{(m)}_{i,j}y_j = \ \begin{cases}\ds \frac{1}{(2n-1)!!} {m \choose i}h_i& \text{if }m=2n,\\[3mm]
\ds \frac{1}{(2n+1)!!}{m \choose i}h_i & \text{if }m=2n+1, \end{cases}
\end{equation}
where $y_j= 2^j{n \choose j}x_j$ for all $j\in[0,n]$.
Let $B_m$ be the $n\times n$ matrix $(b^{(m)}_{i,j}:i,j\in[0,n])$, and let $\bs{b}_j$ be its $j$th column, so $B_m=[\bs{b}_0\ \bs{b}_1\ \cdots\ \bs{b}_n]$. Then we can express the equations $A_m\bs{x}=\bs{h}$ in a form which suppresses irrelevant scalars.

The (convex) \emph{cone} $\C(S)\subseteq\reals^{n+1}$ generated by the set $S\subseteq\reals^{n+1}$ is the closure of $S$ under addition and multiplication by non-negative scalars. That is, for all $\bs{x}\in\C(S)$, $\kappa\bs{x}\in\C(S)$ for any $\kappa\geq 0$ and, for all $\bs{x},\bs{y}\in\C(S)$, $\bs{x}+\bs{y}\in\C(S)$. Clearly we always have $\mathbf{0}\in\C(S)$. Also, establishing $\bs{x}\in\C(S)$ is clearly equivalent to establishing $\kappa\bs{x}\in\C(S)$ for any $\kappa> 0$. Then we may restate Theorem~\ref{thm01} as
\begin{theorem}\label{thm02}
Let $F : \{0,1\}^d \mapsto \reals^+$ be a symmetric function and, let $G$ be a pinning of $F$ with arity $m$. Let $H(x) \eqdef G(x)G(\bar{x})$ with half-vector $\bs{h}=[h_0,h_1,\ldots,h_{n}]$, where $n\eqdef\lfloor m/2\rfloor$. Let $C_{n}\subset\reals^{n+1}$ be the cone generated by the columns of $B_{m}$, and let $\bs{z}(\bs{h})$ be the vector such that $\bs{z}(\bs{h})_i={m\choose i}h_i$ $(i\in[0,n])$. Then $F$ is windable if and only if $\bs{z}(\bs{h})\in C_{n}$ for every such $G$.
\end{theorem}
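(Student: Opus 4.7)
The plan is to observe that Theorem~\ref{thm02} is essentially a cosmetic reformulation of Theorem~\ref{thm01}, repackaging the solvability condition for $A_m\bs{x}=\bs{h}$ in the language of convex cones. The work consists in checking that two rewritings of the linear system commute appropriately with the nonnegativity requirement.

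First I would invoke Theorem~\ref{thm01} to reduce windability of $F$ to the existence, for every pinning $G$ of every arity $m\leq d$, of a nonnegative solution $\bs{x}\in(\reals^+)^{n+1}$ to $A_m\bs{x}=\bs{h}$, where $\bs{h}$ is the half-vector of $H$. Next I would apply the explicit formulas~\eqref{eq:an1} and \eqref{eq:an2} that factor $a^{(m)}_{i,j}$ so that the $j$-dependent part appears only in $2^j\binom{n}{j}b^{(m)}_{i,j}$ and the $i$-dependent part appears only in $\binom{m}{i}^{-1}$ times the relevant double factorial. Dividing row $i$ of $A_m\bs{x}=\bs{h}$ by the global scalar $(2n-1)!!$ or $(2n+1)!!$ and by $\binom{m}{i}^{-1}$ turns the right-hand side into $\binom{m}{i}h_i/(2n\pm 1)!!$ and the matrix into $B_m$ acting on the new variable $y_j\eqdef 2^j\binom{n}{j}x_j$; this is exactly equation~\eqref{eq:bnsum}.

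The key observation, then, is that the coefficient $2^j\binom{n}{j}$ is strictly positive for every $j\in[0,n]$, so the substitution $\bs{x}\leftrightarrow\bs{y}$ is a bijection between $(\reals^+)^{n+1}$ and itself. Hence $A_m\bs{x}=\bs{h}$ admits a nonnegative solution if and only if $B_m\bs{y}=\bs{z}(\bs{h})/(2n\pm 1)!!$ does. Since $(2n\pm 1)!!>0$, scaling both sides by this positive constant shows that the latter is equivalent to $\bs{z}(\bs{h})$ lying in the cone generated by the columns of $B_m$, which is $C_n$ by definition. Chaining these equivalences together yields the claimed characterisation of windability.

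I do not expect any real obstacle: the whole argument is a bookkeeping exercise built on Theorem~\ref{thm01} and the already-derived identities \eqref{eq:an1}, \eqref{eq:an2}, \eqref{eq:bnsum}. The only point demanding any care is to note that the substitution $y_j=2^j\binom{n}{j}x_j$ preserves nonnegativity in both directions (so that one can pass freely between solutions of $A_m\bs{x}=\bs{h}$ and solutions of the $B_m$-system), and that the positive prefactor $(2n\pm 1)!!$ lets us identify ``$\bs{z}(\bs{h})/(2n\pm 1)!!\in C_n$'' with ``$\bs{z}(\bs{h})\in C_n$''.
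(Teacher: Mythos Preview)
Your proposal is correct and mirrors the paper's own treatment: the paper presents Theorem~\ref{thm02} as a direct restatement of Theorem~\ref{thm01}, justified by the derivation of~\eqref{eq:bnsum} from~\eqref{eq:an1}--\eqref{eq:an2} via the positive diagonal substitution $y_j=2^j\binom{n}{j}x_j$ together with the observation that cone membership is invariant under multiplication by positive scalars. You have identified exactly these ingredients and the care points (nonnegativity preservation, scaling by $(2n\pm1)!!$) you flag are the only ones present.
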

Lemma~\ref{lem01} implies that we need only apply this criterion to functions $H$ of the form $F_{\beta,d}$.

From~\eqref{eq:bnsum}, we have $\bs{z}(\mathbf{1})\in C_n$, where $\mathbf{1}$ is the all-1's vector.  (To see this, substitute $\bs{x}=\mathbf{1}$ 
in~\eqref{eq:ansum} to give $A\mathbf{1}=\bs{h}$, where $\bs{h}=(2n\pm1)!!\,\mathbf{1}$.  Now transform variables from $\bs{x}$ to $\bs{y}$ and compare with~\eqref{eq:bnsum}.)
Observe that $b^{(m)}_{i,j}=b^{(m-2k)}_{i-k,j-k}$, for all $k\in[0,n]$.  (Informally, deleting the first row and column of the matrix $B_m$ yields the matrix $B_{m-2}$.)  Then it follows that there exists a vector of the form 
$$
\hat{\bs{y}}=(\underbrace{0,0,\ldots,0}_{\text{$k$ copies}},y_0,y_1,\ldots,y_{n-k}) 
$$
such that
\begin{equation}\label{eq:bksum}
\sum_{j=0}^n b^{(m)}_{i,j}\yhat_j=\sum_{j=k}^n b^{(m)}_{i,j}y_{j-k}\ =\ \sum_{j=0}^{n-k} b^{(m-2k)}_{i-k,j}y_j \ =\ \begin{cases}\ds \frac{1}{(2n-2k-1)!!} {m-2k \choose i-k}& \text{if\ }m=2n,\\[4mm]
\ds \frac{1}{(2n-2k+1)!!}  {m-2k \choose i-k} & \text{if }m=2n+1. \end{cases}\,
\end{equation}
Specifically, we can choose $(y_0,y_1,\ldots,y_{n-k})$ to be the vector that witnesses, via~\eqref{eq:bnsum}, the membership of $\bs{z}(\mathbf{1}_{n-k+1})$ in the cone $C_{n-k}$.
Thus, for all $k\in[0,n]$, the vector
\[ \bs{v}_k=\left[{m-2k \choose i-k}:i\in [0,n]\right]\]
is in the cone~$C_n$.  Let $D_n$ be the cone generated by the set $\{\bs{v}_0,\bs{v}_1,\ldots,\bs{v}_n\}$. Then $D_n\subseteq C_n$.
Thus, to show $\bs{z}(\bs{h})\in C_n$, it suffices to show that $\bs{z}(\bs{h})\in D_n$.
\begin{lemma}\label{lem02}
$\bs{z}(\bs{h})\in D_n$, where $\bs{h}$ is the half-vector representing $F_{\beta,m}$
\end{lemma}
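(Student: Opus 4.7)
My plan is to reduce the claim, via a change of basis and a Taylor expansion in $\beta$, to an induction driven by a single combinatorial identity with manifestly non-negative coefficients. Since $D_n$ is generated by the vectors $\bs{v}_k = [\binom{m-2k}{i-k} : i \in [0,n]]$, it suffices to find $\lambda_k \ge 0$ satisfying $\binom{m}{i} e^{\beta i(m-i)} = \sum_k \lambda_k \binom{m-2k}{i-k}$ for $i \in [0,n]$. The ``subset-within-subset'' identity
\[
\binom{m}{i}\binom{i}{k}\binom{m-i}{k} = \binom{m}{2k}\binom{2k}{k}\binom{m-2k}{i-k}
\]
(both sides count triples $A \subseteq S \subseteq [m]$, $B \subseteq [m]\setminus S$ with $|S|=i$ and $|A|=|B|=k$) rescales the goal to the equivalent requirement that
\[
q^{i(m-i)} = \sum_{k=0}^n \mu_k \binom{i}{k}\binom{m-i}{k}, \qquad q := e^\beta,
\]
admit a solution with $\mu_k := \lambda_k / [\binom{m}{2k}\binom{2k}{k}] \ge 0$.

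Next I expand $q^{i(m-i)} = \sum_{r\ge 0} \beta^r [i(m-i)]^r / r!$ and note that $\binom{i}{k}\binom{m-i}{k}$ vanishes at every $i\in[0,n]$ whenever $k > n$. Consequently, if each polynomial $[i(m-i)]^r$ decomposes as $\sum_k a_{r,k}^{(m)} \binom{i}{k}\binom{m-i}{k}$ with $a_{r,k}^{(m)} \ge 0$, then $\mu_k = \sum_{r\ge 0} \beta^r a_{r,k}^{(m)}/r! \ge 0$ for every $k \in [0,n]$, since $\beta \ge 0$. The crux is the identity
\[
i(m-i)\binom{i}{k}\binom{m-i}{k} = k(m-k)\binom{i}{k}\binom{m-i}{k} + (k+1)^2\binom{i}{k+1}\binom{m-i}{k+1},
\]
which I derive by combining $i\binom{i}{k} = k\binom{i}{k}+(k+1)\binom{i}{k+1}$ (and its analogue for $m-i$), expanding the product, and collapsing the resulting cross-term via
\[
\binom{i}{k}\binom{m-i}{k+1} + \binom{i}{k+1}\binom{m-i}{k} = \frac{m-2k}{k+1}\binom{i}{k}\binom{m-i}{k},
\]
a direct factorial calculation. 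Both coefficients $k(m-k)$ and $(k+1)^2$ are non-negative, so an easy induction on $r$ (with base case $[i(m-i)]^0 = \binom{i}{0}\binom{m-i}{0}$) yields the recurrence $a_{r+1,k}^{(m)} = k(m-k) a_{r,k}^{(m)} + k^2 a_{r,k-1}^{(m)}$, from which $a_{r,k}^{(m)} \ge 0$ follows immediately.

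The main obstacle is spotting the displayed recurrence identity: its real content is that the operator ``multiplication by $i(m-i)$'' acts upper-bidiagonally on the basis $\{\binom{i}{k}\binom{m-i}{k}\}$, with both diagonal and off-diagonal coefficients non-negative. Once this structural fact is isolated, the basis change, Taylor expansion, and induction are routine bookkeeping.
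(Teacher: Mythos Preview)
Your proof is correct and follows essentially the same route as the paper: expand $e^{\beta i(m-i)}$ as a Taylor series and show, by induction on the power $r$, that multiplication by $i(m-i)$ acts bidiagonally with non-negative coefficients on the chosen basis. The only difference is a cosmetic change of basis: the paper works directly with the generators $v_{i,k}=\binom{m-2k}{i-k}$ of $D_n$ and obtains the recurrence $i(m-i)\,v_{i,k}=(m-2k)(m-2k-1)\,v_{i,k+1}+k(m-k)\,v_{i,k}$, whereas your subset-within-subset identity rescales these to $w_{i,k}=\binom{i}{k}\binom{m-i}{k}$, turning the off-diagonal coefficient into $(k+1)^2$; the two recurrences are equivalent under $\binom{m}{i}w_{i,k}=\binom{m}{2k}\binom{2k}{k}v_{i,k}$.
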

\begin{proof}
Expanding the exponential function in the definition of~$h_i$,
\[  \bs{z}(\bs{h})_i\ =\  {m\choose i}h_i\ =\ \sum_{j=0}^\infty \frac{\beta^j}{j!} \big(i(m-i)\big)^j{m \choose i}\qquad (i\in[0,n])\,.\]
Thus, the claim $\bs{z}(\bs{h})\in D_n$ will follow from showing that the vector with entries $\ds \big(i(m-i)\big)^j{m \choose i}$\quad $(i\in[0,n])$ is in $D_n$ for all $j\in\nats$.

Now, since $\bs{v}_0=\bs{z}(\mathbf{1})\in D_n$, the inclusion $\bs{z}(\bs{h})\in D_n$ will follow, by induction on $j$, from the following

\textbf{Claim}: If $\bs{u}\in D_n$, then the the vector $\bs{u}^*$ with entries $u^*_i=i(m-i)u_i$ is in $D_n$.

\emph{Proof}.
It suffices to prove this for the generators of $D_n$, namely $\bs{v}_k=[v_{0,k},v_{1,k},\ldots,v_{n,k}]$. Then,
if $\bs{v}_{-1}\eqdef \mathbf{0}$,
\begin{align*}
      i(m-i)v_{i,k} &= \big[(i-k)(m-k-i)+k(m-k)\big]{m-2k \choose i-k} \\
       &= \frac{(m-2k)!}{(i-k-1)!(m-k-i-1)!}+ k(m-k){m-2k \choose i-k} \\
       &= (m-2k)(m-2k-1)\frac{(m-2k-2)!}{(i-k-1)!(m-k-i-1)!}+ k(m-k){m-2k \choose i-k} \\
       &= (m-2k)(m-2k-1)v_{i,k+1}+ k(m-k)v_{i,k}
    \end{align*}
So $\bs{v}^*_k=(m-2k)(m-2k-1)\bs{v}_{k+1}+k(m-k)\bs{v}_k\in \C(\bs{v}_{k+1},\bs{v}_k)\subseteq D_n$. 
Note that the scalars appearing in the above linear combination of $\bs{v}_{k+1}$ and $\bs{v}_{k}$ are non-negative: if $m=2n$ is even and $k=n$ then factor $m-2k-1$ is negative, but in this case $m-2k=0$.
\end{proof}

Theorem~\ref{thm:main} follows by combining Lemma~\ref{lem01}, Theorem~\ref{thm02}, Lemma~\ref{lem02} and the fact that $D_n\subseteq C_n$.   

\section{Positive results for approximate computation} \label{sec:pos}

It is now a short step to the two results claimed in the Introduction.  Note that in this section $n=|V(\Gamma)|$ and $m=|E(\Gamma)|$ will always denote the number of vertices and edges of an instance graph~$\Gamma$.

\subsection{Proofs of the main results}

\begin{proof}[Proof of Theorem~\ref{thm:FPAUS}]
This result follows directly from \cite[Lemma 29]{HLZ16} using Theorem~\ref{thm:main} and Lemma~\ref{lem:ratio}.  Note that the quantity $\mu_\Lambda(\Omega_0)$ appearing in that lemma is just $\hol_0(\Gamma)/(\hol_0(\Gamma)+\hol_2(\Gamma))$ in our notation, and hence $\mu_\Lambda(\Omega_0)^{-1}=O(m^2)$.
\end{proof}

\begin{proof}[Proof of Theorem~\ref{thm:FPRAS}]
The reduction from estimating the partition function to sampling configurations is standard.  Choose a suitably spaced sequence of values $0=\beta_0<\beta_1<\cdots<\beta_r=\beta$.  The ratios $Z_{\beta_i,\nu}(\Gamma)/Z_{\beta_{i-1},\nu}(\Gamma)$, for $1\leq i\leq r$, may be estimated from polynomially many samples from the distribution $\mathcal{D}_{\Gamma,\beta_{i-1},\nu}$ using the Markov chain described in Section~\ref{sec:overview}, which mixes in polynomial time by Theorem~\ref{thm:FPAUS}.  The partition function $Z_{0,\nu}(\Gamma)$ factorises, and is hence easy to compute.  Thus, $Z_{\beta,\nu}(\Gamma)$ itself may be estimated from the telescoping product
$$
Z_{\beta,\nu}(\Gamma)=\frac{Z_{\beta_r,\nu}(\Gamma)}{Z_{\beta_{r-1},\nu}(\Gamma)}\times 
\frac{Z_{\beta_{r-1},\nu}(\Gamma)}{Z_{\beta_{r-2},\nu}(\Gamma)}\times\cdots\times
\frac{Z_{\beta_1,\nu}(\Gamma)}{Z_{\beta_0,\nu}(\Gamma)}\times Z_{0,\nu}(\Gamma).
$$
Details can be found in \cite{SVV}, together with refinements.  

Alternatively, one could show self-reducibility of the partition function and apply \cite[Thm~3]{HLZ16}.  Details of this approach can be found in \cite[\S3.2]{mono}.
\end{proof}

\subsection{Glauber dynamics}\label{sec:Glauber}

Theorem~\ref{thm:FPAUS} was established by considering a Markov chain based on updates to half-edges of~$\Gamma$ (see Section~\ref{sec:overview}).  A more natural approach would be to work with whole edges.  So, the configurations would simply be the natural ones, namely $\sigma:E(\Gamma)\to\{0,1\}$, and a transition would update a single (whole) edge.  We might call this the (single-site) Glauber dynamics.  As usual, it is easy to arrange transition probabilities so that the stationary distribution of the Markov chain is the Gibbs distribution.  

A convenient way to define Glauber dynamics for our purposes is the following.  Start with the half-edge Markov chain on $\Omega=\Omega_0\cup\Omega_2$ as defined in~\eqref{eq:Metropolis}, and censor all transitions from a state in~$\Omega_0$ to a state in $\Omega_2$.  That is to say, instead of making a transition from $\tau\in\Omega_0$ to $\tau'\in\Omega_2$, the Markov chain remains at~$\tau$.  Since the number of available transitions out of any state is now $m$ instead of $\binom{2m}2$, we take the opportunity to increase the proposal probability from $1/4m^2$ to $1/2m$.  This simple modification speeds up the Markov chain and reduces the mixing time by a factor~$2m$.  Assuming the initial state is in~$\Omega_0$, the censored Markov chain on state space~$\Omega_0$ is ergodic with stationary distribution $\calD_{\Gamma,\beta,\nu}$.  Note that every transition of this derived Markov chain has the following property:  the two half-edges whose spins are flipped belong to the same whole edge.  Since $\Omega_0$ is isomorphic to $2^{E(\Gamma)}$ it would have been more natural to define the Glauber dynamics directly on whole-edge configurations, but the above view simplifies the analysis that follows. 

To show that Glauber dynamics is rapidly mixing, we will compare it with the already analysed half-edge Markov chain.  To follow the argument in detail, some acquaintance with the canonical paths method is required, and in particular the use made of it by Huang et al.~\cite{HLZ16}.  However, it should be possible to obtain an intuitive grasp of the argument, without that prerequisite.  Denote the transition probabilities of the half-edge Markov chain by $P:\Omega^2\to[0,1]$ and the stationary distribution by $\pi:\Omega\to[0,1]$.  Imagine the Markov chain as a transition graph with states as vertices and transitions as edges.  Each edge (transition) $(\tau,\tau')$ is assigned a capacity $Q(\tau,\tau')=\pi(\tau)P(\tau,\tau')$.  Since the Markov chain is time-reversible, $Q(\tau,\tau')=Q(\tau',\tau)$ and the capacity is well defined.  For each pair of states $\sigma,\sigma'$, we specify a collection of (canonical) paths from $\sigma$ to $\sigma'$ through which we route a flow of $\pi(\sigma)\pi(\sigma')$.  Suppose we can do this in such a way that the total flow through any transition $\tau\to\tau'$ is small, specifically, less than or equal to $\rho\,Q(\tau,\tau')$ for some uniform bound~$\rho$.  Then the relaxation time of the Markov chain (a quantity strongly related to the mixing time) is known to be $O(\rho)$.  The quantity $\rho$ is the \emph{congestion} of the chosen canonical paths.  For more detail see, e.g., \cite[\S5.2]{mono}.

The canonical paths exploited first by McQuillan and later by Huang et al., are derived from collection of (graph theoretic) paths and cycles in the instance graph~$\Gamma$.  Given such a collection, the corresponding canonical path in the transition graph is obtained by ``unwinding'' these paths and cycles in sequence.  Unwinding consists in tracing along the path or around the cycle, at each vertex flipping the two half-edges incident at that vertex.  (See \cite[\S B]{HLZ16} for details.)  Suppose that $e_0,e_1,e_2,e_3,\ldots,e_{2\ell-2},e_{2\ell-1}$ is a path or cycle in $\Gamma$ of length~$\ell$, composed of half-edges.  For each $0\leq i\leq\ell-1$, the half edges $e_{2i},e_{2i+1}$ form a single edge, and for each $1\leq i\leq\ell-1$, the half edges $e_{2i-1},e_{2i}$ are incident at a common vertex.  The half edges $e_0$ and $e_{2\ell-1}$ may or may not be incident at a common vertex.  
In Huang et al's analysis, the half-edges are flipped in the sequence $(e_{2\ell-1},e_0),(e_1,e_2),\ldots,(e_{2\ell-3},e_{2\ell-2})$.  If we pair the half edges in the other natural way, i.e., $(e_0,e_1),(e_2,e_3),\ldots,(e_{2\ell-2},e_{2\ell-1})$, then we obtain a canonical path that flips whole edges at each step.  In this way, a canonical path on half-edges induces a canonical path on (whole) edges.  As we shall see, these canonical paths witness rapid mixing of the Glauber dynamics.  It is important in this context to note that the signatures we are working with are all permissive, in the sense that each function $f_v$ is supported on the whole of $\{0,1\}^{E(v)}$.  In other situations, the derived paths might be invalid.

In more detail, suppose    
$$
\sigma=\sigma_0\to\sigma_1\to\cdots\to\sigma_i\to\sigma_{i+1}\to\cdots\to\sigma_\ell=\sigma'
$$
is the canonical path in $\Omega_0\cup\Omega_2$ from $\sigma\in\Omega_0$ to $\sigma'\in\Omega_0$ that results from the sequence $(e_{2\ell-1},e_0),(e_1,e_2),\ldots,(e_{2\ell-3},e_{2\ell-2})$ of half-edge flips.  Then let 
$$
\sigma=\sigmatilde_0\to\sigmatilde_1\to\cdots\to\sigmatilde_i\to\sigmatilde_{i+1}\to\cdots\to\sigmatilde_\ell=\sigma'
$$
be the derived path in $\Omega_0$ resulting from the sequence $(e_0,e_1),(e_2,e_3),\ldots,(e_{2\ell-2},e_{2\ell-1})$.  Thus, for $0\leq i\leq\ell-1$,
$$
\sigmatilde_{i+1}(e)=
\begin{cases}
1-\sigmatilde_{i}(e),&\text{if $e\in\{e_{2i},e_{2i+1}\}$};\\
\sigmatilde_i(e),&\text{otherwise}.
\end{cases}
$$
The derived path stays close to the original path at all times;  specifically, for $1\leq i\leq\ell-1$,
\begin{equation}\label{eq:sigmatosigmatilde}
\sigmatilde_i(e)=\begin{cases}
1-\sigma_i(e),&\text{if $e\in\{e_{2i-1},e_{2\ell-1}\}$};\\
\sigma_i(e),&\text{otherwise}.
\end{cases}
\end{equation}
 
Consider a transition $\sigmatilde_i\to\sigmatilde_{i+1}$ in the derived path.  By comparing $\sigmatilde_i$ and $\sigmatilde_{i+1}$ we may deduce the unordered pair of edges $\{e_{2i},e_{2i+1}\}$, and with one extra bit of information we may identify the edges $e_{2i}$ and $e_{2i+1}$ individually.  If we now specify $e_{2i-1}$ and $e_{2\ell-1}$, we may recover the corresponding transition $\sigma_i\to\sigma_{i+1}$ in the original path, by putting \eqref{eq:sigmatosigmatilde} into reverse.  (The first and last transitions are special, but the given data is still sufficient to recover the transition $\sigma_i\to\sigma_{i+1}$.)  Since $e_{2i-1}$ shares a vertex with edge $e_{2i}$, we have at most $n$~choices for $e_{2i-1}$, where $n=|V(\Gamma)|$.  The upshot of this is that, given the transition $\sigmatilde_i\to\sigmatilde_{i+1}$, there are at most $2nm$ possibilities for the corresponding transition $\sigma_i\to\sigma_{i+1}$.  Suppose $\tautilde\to\tautilde'$ is any transition with $\tautilde,\tautilde'\in\Omega_0$.  Define the \emph{halo} around $\tautilde\to\tautilde'$ to be the set of $2nm$ transitions $\tau\to\tau'$ that correspond to $\tautilde\to\tautilde'$ via the above procedure.  Note that the flow in any derived path through $\tautilde\to\tautilde'$ must originally have passed through some transition in this (small) halo.

Now observe that the capacity of any transition $\tau\to\tau'$ in the halo of transition $\tautilde\to\tautilde'$ is not too much bigger that the capacity of $\tautilde\to\tautilde'$.  Since $\tau$ differs from $\tautilde$ at at most two half-edges, the weights $\what(\tau)$ and $\what(\tautilde)$ are within a constant factor of each other; the same applies to $\what(\tau')$ and $\what(\tautilde')$.  It follows that $Q(\tau,\tau')\leq CQ(\tautilde,\tautilde')$, where $C=C(\beta,\nu,\Delta)$.  Suppose $\rho$ is the (known) congestion of the half-edge Markov chain.  Then the flow in transition $\tau\to\tau'$ is bounded by $\rho\,Q(\tau,\tau')\leq C\rho\,Q(\tautilde,\tautilde')$.  So if we reroute the flow between pairs of states in $\Omega_0$ through the derived paths, the new flow in transition $\tautilde\to\tautilde'$ is bounded by  $2nm\times C\rho\,Q(\tautilde,\tautilde')$.  So rerouting flow through paths lying entirely within $\Omega_0$ increases congestion by at most a factor $2Cnm$.  

Since the derived paths lie entirely within~$\Omega_0$, they can be used to bound the congestion of the censored Markov chain (Glauber dynamics). Denote the transition probabilities of this full-edge Markov chain by $\Ptilde:\Omega_0^2\to[0,1]$ and the stationary distribution by $\pitilde:\Omega_0\to[0,1]$.  Also define the capacities of transitions in the censored Markov chain by $\Qtilde(\tau,\tau')=\pitilde(\tau)\Ptilde(\tau,\tau')$.  Let $R=\pi(\Omega_0)^{-1}$, and note that $\pitilde(\tau)=R\pi(\tau)$ and 
$$\Qtilde(\tau,\tau')=\pitilde(\tau)\Ptilde(\tau,\tau')=R\pi(\tau)\times 2mP(\tau,\tau')=2Rm Q(\tau,\tau'),$$ 
for all $\tau,\tau'\in\Omega_0$.  Thus the flows are increased by a factor $R^2$ but the capacities are increased by a factor $2Rm$.  Thus the congestion $\rhotilde$ of the censored Markov chain is increased by a factor $R/2m$.  Putting it all together, $\rhotilde=2Cnm\rho R^2/2mR=CnR\rho$.  Since $R=O(m^2)$, we see that the relaxation time (and hence the mixing time) of Glauber dynamics is at most $O(nm^2)$ longer than that of the half-edge Markov chain.  The comparison argument could be tightened by delving further into the details of the canonical paths analysis in~\cite{HLZ16}.  

\subsection{Ferromagnetic interactions}
Direct calculation from Theorem~\ref{thm01} shows that $[1,e^{2\beta},e^{2\beta},1]$ is windable when $e^{4\beta}\geq\frac12$. Thus, Glauber dynamics is rapidly mixing on line graphs of cubic graphs (such as fragments of the kagome lattice) when $\beta>-0.173286$.  The kagome lattice exhibits a phase transition when $e^{4\beta}=1/(3+2\sqrt3)$~\cite{Syozi}, so we cannot expect Glauber dynamics to be rapidly mixing when $\beta<-0.466567$.

Rapid mixing can also be deduced from Theorem~\ref{thm01} for graphs of maximum degree $\Delta$, for small values of $\Delta>3$ and all $\beta$ in some range $\beta_\Delta\leq \beta<0$.  It seems likely that a general argument could be found covering all $\Delta$, though we have not done this.  However, $\beta_\Delta$ will necessarily converge to~0 as $\Delta\to\infty$.  For $\Delta$ even, consider the star $\Gamma=K_{1,\Delta}$, the line graph of which is the complete graph~$K_\Delta$.  Let $\Omega$ denote the set of all configurations $E(\Gamma)\to\{0,1\}$ and $\Omega_{1/2}$ denote the subset of balanced configurations with equal numbers of 0s and 1s.  Then 
$$
\sum_{\sigma\in\Omega}\what(\sigma)\geq2\qquad\text{and}\qquad \sum_{\sigma\in\Omega_{1/2}}\what(\sigma)\leq2^\Delta\exp(\beta\Delta^2/4),
$$ 
where the first inequality comes from considering just the all 0s and all 1s configurations.  When $\beta<-4/\Delta$, the second sum is exponentially smaller than the first.  Since $\Omega_{1/2}$ forms a cut in the state space, a standard conductance bound \cite[Thm~2]{flow} then shows that mixing time of Glauber dynamics is exponential in~$\Delta$.

\subsection{Local fields}
Finally, varying fields, with an individual strength~$\nu_k$ for each vertex~$k$ of $L(\Gamma)$ can be handled.  One way is to repeat the calculations in Sections \ref{sec:holants} and \ref{sec:windability} with $\nu_k$ and~$\mu_k$ in place of~$\mu$ and $\nu$.  The notational complexity increases, but there are no essential changes.  However, the same end can be achieved with less effort as follows.  Subdivide each edge of $\Gamma$ by a single vertex with signature vector $[1,0,e^{\nu_k}]$. This signature is windable by \cite[Lemma 21]{HLZ16}, and correctly implements the field acting on a single spin.  The Markov chain on the extended configurations is rapidly mixing, as before.  

\section{Negative results for exact computation}\label{sec:neg}

The problem $\Hol([x_0,x_1,x_2,x_3])$ is the following.  Given a cubic graph~$\Gamma$, evaluate the holant~\eqref{eq:holant}, where $f_k$ is the function with signature vector $[x_0,x_1,x_2,x_3]$, for all $k\in V(\Gamma)$.  The problem $\PlHol([x_0,x_1,x_2,x_3])$ is the same with the instance~$\Gamma$ restricted to be planar.
According to the correspondence derived in Section~\ref{sec:holants}, the partition function of the antiferromagnetic Ising model on a line graph of a cubic graph can be expressed as an instance of $\Hol([1,a,a,1])$ with $a>1$.\footnote{To avoid issues with the representation of real numbers, and so obtain a clean computational problem, we introduce $a=e^{2\beta}$ and assume $a\in\mathbb{Q}$.}  Lemma~\ref{lem:zerofield} thus implies that computing \emph{exactly} the partition function of the antiferromagnetic Ising model on a graph~$\Gamma$ is $\numP$-hard, even in the absence of an external field, and even when $\Gamma$ is the line graph of a cubic graph.  The following results exploit a tiny fraction of a vast theory of the computational complexity of holant problems.  To see the wider scope, refer to Cai and Chen's monograph~\cite{CaiChenBook}.

\begin{lemma}\label{lem:zerofield}
$\Hol([1,a,a,1])$ is $\numP$-hard, for all $a\in \mathbb{Q}\cap(1,\infty)$.
\end{lemma}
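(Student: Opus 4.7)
The approach is to appeal to the well-developed dichotomy theory for Boolean Holant problems on cubic graphs, presented comprehensively in Cai and Chen's monograph~\cite{CaiChenBook}. That theory classifies the complexity of $\Hol([f_0,f_1,f_2,f_3])$ for every symmetric ternary signature with rational entries as either polynomial-time solvable or $\numP$-hard, with the tractable signatures forming a small, explicit list: degenerate signatures, product-type signatures, signatures of affine type over $\mathbb{F}_2$, and holographic transformations of matchgate-realisable signatures. My plan is to verify that $[1,a,a,1]$ with $a \in \mathbb{Q} \cap (1,\infty)$ lies outside every tractable family and then invoke the dichotomy to conclude $\numP$-hardness.

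The easier cases can be dispensed with by inspection of the four entries. All entries are strictly positive, so $[1,a,a,1]$ has full support on $\{0,1\}^3$, which rules out signatures supported on a proper affine subspace (the affine case) and rules out the standard product decompositions built from equality and disequality signatures. Degeneracy would require $f_i = g(0)^{3-i} g(1)^i$ for some unary signature $g$; using $f_0 = f_3 = 1$ one then forces $g(0) = g(1) = 1$, which gives $a = 1$ and contradicts $a > 1$. The delicate case is matchgate realisability, possibly after a holographic change of basis in $\mathrm{GL}_2(\mathbb{Q})$; this is to be ruled out by checking that the algebraic invariants that a matchgate signature must satisfy (of Grassmann--Pl\"ucker type) fail to be consistent with the entries $(1,a,a,1)$ whenever $a \neq 1$, and that these invariants are preserved by holographic transformations.

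The main obstacle is precisely that last step, because the matchgate/holographic case is the only one not immediately visible from the entries of the signature. Handling it uniformly is exactly what the dichotomy of~\cite{CaiChenBook} is designed for, and leaning on it packages the required algebraic checks into a single appeal to a theorem in the monograph. A more self-contained alternative would be a direct reduction from a $\numP$-hard combinatorial problem on cubic graphs, for example counting maximum cuts, which corresponds to ground states of the antiferromagnetic Ising model, implemented via gadget replacements that simulate the desired signature; but such a reduction would require a substantial separate combinatorial argument, so citing the dichotomy is considerably more economical and is the route I would take.
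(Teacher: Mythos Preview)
Your plan is viable in principle but differs from the paper's route and, as written, has a real gap. The paper does not invoke a black-box dichotomy for a single ternary signature. Instead it performs the explicit holographic reduction of Cai--Huang--Lu~\cite{CHL12}: it writes $[1,a,a,1]$ (up to scaling) as $x_i=\alpha_1^{3-i}\alpha_2^i+\beta_1^{3-i}\beta_2^i$ with $\alpha_1=\beta_2=e^{i\theta}$, $\alpha_2=\beta_1=e^{-i\theta}$ for some $0<\theta<\pi/6$, takes the corresponding $2\times2$ transformation matrix~$T$, and lands on the bipartite problem $\Hol([y_0,y_1,y_2]\mid[1,0,0,1])$ with $[y_0,y_1,y_2]=[2\cos2\theta,\,2,\,2\cos2\theta]$. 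Hardness then follows by checking that this triple avoids the four explicit exceptional conditions of~\cite[Theorem~2]{CHL12}. That check is a couple of lines: $y_1\neq0$, $y_0=y_2\neq0$, $y_0y_2=4\cos^2 2\theta\neq4=y_1^2$, and $|y_0|<|y_1|$, so none of (1)--(4) hold. The paper's approach therefore converts the ``delicate'' case you flag into four concrete inequalities rather than an appeal to Pl\"ucker-type invariants.

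Two specific issues with your proposal. First, your list of tractable families is off for the non-planar problem: matchgate-realisability (even up to holographic transformation) is \emph{not} one of the exceptional cases for $\Hol$ on general cubic graphs---it only enters in the planar dichotomy (compare Lemma~\ref{lem:equiv} with Lemma~\ref{lem:equivplanar}, where condition~(5) is the matchgate case). So the ``main obstacle'' you identify is not actually present here; what you would need to rule out are the specific algebraic exceptions in the CHL dichotomy, which are not naturally described as ``matchgate'' conditions. Second, and more importantly, you never actually carry out the verification you describe as delicate: ``checking that the algebraic invariants\ldots fail'' is the entire content of the lemma once one cites a dichotomy, and you leave it as a promise. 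If you want to stay at the level of the packaged cubic dichotomy in~\cite{CaiChenBook}, you must state precisely which theorem you are invoking and then verify each exceptional clause against $[1,a,a,1]$; otherwise the argument is a plan, not a proof.
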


The route to proving this lemma is via an intermediate computational problem.  Suppose $\Gamma'$ is a bipartite graph with vertices of degree two on one side of the bipartition and of degree three on the other.   The problem $\Hol([y_0,y_1,y_2]\mid[1,0,0,1])$ is the following.  Given a graph~$\Gamma'$ of the specified form, evaluate the holant~\eqref{eq:holant}, where vertices of degree two have signature vector $[y_0,y_1,y_2]$ and those of degree three have signature vector $[1,0,0,1]$ (i.e., ternary equality).  Following Cai, Huang and Lu,~\cite[\S3]{CHL12} we derive (possibly complex) numbers $y_0,y_1,y_2$ such that $\Hol([x_0,x_1,x_2,x_3])$ is polynomial-time interreducible with $\Hol([y_0,y_1,y_2]\mid[1,0,0,1])$.  We then complete the proof of Lemma~\ref{lem:zerofield} by invoking the following dichotomy result. 

\begin{lemma}\cite[Theorem 2]{CHL12}. \label{lem:equiv} 
The problem $\Hol([y_0,y_1,y_2]\mid[1,0,0,1])$ is $\numP$-hard for all $y_0,y_1,y_2\in\mathbb{C}$, except in the following cases, for which the problem is in $\mathrm{P}$: (1)~$y_1^2=y_0y_2$; (2)~$y_0^{12}=y_1^{12}$ and $y_0y_2=-y_1^2 \>(y_1\not=0)$; (3)~$y_1=0$; and (4)~$y_0=y_2=0$.
\end{lemma}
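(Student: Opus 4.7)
For the four exceptional cases, the argument is direct. After contracting each degree-3 equality vertex and treating each degree-2 vertex as a weighted edge carrying the symmetric matrix $M=\bigl(\begin{smallmatrix}y_0 & y_1\\ y_1 & y_2\end{smallmatrix}\bigr)$, the holant equals
\[
\sum_{\sigma:V(G)\to\{0,1\}}\,\prod_{\{u,v\}\in E(G)} y_{\sigma(u)+\sigma(v)}
\]
on the derived cubic (multi)graph~$G$. The four listed exceptions correspond to degeneracies of~$M$: (1)~$y_1^2=y_0y_2$ makes $M$ rank one, so each edge weight factorises as $u_{\sigma(u)}v_{\sigma(v)}$ and the sum collapses to a product over vertices; (3)~$y_1=0$ forces each connected component of~$G$ to have constant spin, giving a closed-form product; (4)~$y_0=y_2=0$ restricts to bipartite~$G$ and reduces to counting proper 2-colourings with uniform edge weight~$y_1$. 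Case~(2) is the delicate affine case: under a basis change $T\in\mathrm{GL}_2(\mathbb{C})$ the binary signature $[y_0,y_1,y_2]$ becomes matchgate-realisable precisely when $y_0y_2=-y_1^2$, while the transformed ternary equality $[1,0,0,1]$ becomes matchgate-realisable precisely when $y_0/y_1$ is a $12$-th root of unity; in that regime the holant is evaluable by a Pfaffian computation, putting the problem in $\mathrm{P}$.

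For hardness outside these four cases, my plan is polynomial interpolation. I would construct a parameterised family of gadgets $\Lambda_k$ ($k\in\pnats$) by serially composing $k$ basic cells, each consisting of a degree-$2$ vertex sandwiched between two degree-$3$ equality vertices with their extra legs connected in a fixed pattern; the two free slots on the ends are wired to the rest of the graph. A short matrix computation shows that the effective binary signature realised by $\Lambda_k$ has a $2\times 2$ matrix of the form $P(M,D)^k$ where $D=\mathrm{diag}(y_0^2,y_2^2)$ captures the internal equality constraint, so after diagonalisation its entries are linear combinations of the powers $\lambda_1^k,\lambda_2^k$ of the eigenvalues of the product $MD$. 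Outside cases~(1),(3),(4) these eigenvalues are nonzero and distinct, so Vandermonde-style interpolation across polynomially many values of~$k$ simulates any desired binary signature in polynomial time; hardness then follows by reducing from a standard $\numP$-hard edge-weighted Ising-type problem on cubic graphs.

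The main obstacle has two faces. First, nailing case~(2) exactly: proving that tractability via a holographic transform arises \emph{if and only if} the sharp condition $y_0^{12}=y_1^{12}$ combined with $y_0y_2=-y_1^2$ holds requires solving the matchgate-identity system simultaneously for the transformed $[y_0,y_1,y_2]$ and the transformed ternary equality; the $12$-th root of unity is exactly what drops out when those two identity systems are imposed together, and establishing necessity (as opposed to just sufficiency) requires ruling out every alternative holographic basis. Second, the interpolation argument must recover \emph{every} binary signature outside the four listed cases, including the awkward regime in which the eigenvalue ratio $\lambda_1/\lambda_2$ happens to be a root of unity; one handles this by introducing an auxiliary gadget that perturbs the spectrum and breaks any residual periodicity, and what makes the argument terminate cleanly at case~(2) is precisely that inside the affine variety every such auxiliary gadget conspires to remain within the matchgate class, explaining why the problem is genuinely in $\mathrm{P}$ there rather than merely resistant to a particular interpolation scheme.
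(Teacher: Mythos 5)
This lemma is not proved in the paper at all: it is imported verbatim from Cai, Huang and Lu \cite{CHL12} (their Theorem~2) and used as a black box in the proof of Lemma~\ref{lem:zerofield}, so there is no internal proof to compare against. What you have written is an attempt to reprove a substantial external dichotomy theorem, and it has a concrete error in the tractability half. You attribute case~(2) to a holographic transformation into the matchgate class followed by ``a Pfaffian computation''. Pfaffian/FKT methods require a planar embedding, but $\Hol([y_0,y_1,y_2]\mid[1,0,0,1])$ is the problem on \emph{general} graphs; planarity is only imposed in the companion problem $\PlHol$, where the matchgate-tractable family appears as the \emph{extra} fifth case $y_0^3=y_2^3$ of Lemma~\ref{lem:equivplanar}. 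The family that is tractable on general graphs in case~(2) is the affine ($\mathcal{A}$-transformable) class, solved by evaluating quadratic Gauss sums; the conditions $y_0y_2=-y_1^2$ and $y_0^{12}=y_1^{12}$ are precisely what make $[y_0,y_1,y_2]$ and the ternary equality simultaneously transformable into that class. As written, your argument does not establish membership in $\mathrm{P}$ for case~(2). (Your treatments of cases (1), (3), (4) are fine.)

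The hardness half is a plan rather than a proof. Serial composition realizes binary signatures whose matrices are essentially powers $(MD)^k$, which sweeps out a one-parameter family; it does not ``simulate any desired binary signature''. You would still need to (i)~exhibit a target problem on cubic graphs already known to be $\numP$-hard and show it is reachable by interpolation from that family, (ii)~handle the regime where the eigenvalue ratio is a root of unity --- your ``auxiliary gadget'' is unspecified, and in \cite{CHL12} this is where most of the case analysis lives, and (iii)~prove the necessity direction, i.e., that interpolation fails \emph{only} on the four exceptional varieties. These three items constitute the actual content of the cited theorem and none is routine. For the purposes of this paper the correct ``proof'' of Lemma~\ref{lem:equiv} is simply the citation.
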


\begin{proof}[Proof of Lemma~\ref{lem:zerofield}]
In the notation of~\cite[\S3]{CHL12}, we have $x_0=x_3=1$ and $x_1=x_2=a$.  The first step is a change of variables.  Write
\begin{align*}
x_0&=\alpha_1^3+\beta_1^3\\
x_1&=\alpha_1^2\alpha_2+\beta_1^2\beta_2\\
x_2&=\alpha_1\alpha_2^2+\beta_1\beta_2^2\\
x_3&=\alpha_2^3+\beta_2^3,
\end{align*}
where $\alpha_1,\alpha_2,\beta_1,\beta_2$ are to be determined.  Setting
$\alpha_1=\beta_2=e^{i\theta}$ and $\alpha_2=\beta_1=e^{-i\theta}$ we have
\begin{align*}
x_0&=2\cos3\theta\\
x_1&=2\cos\theta\\
x_2&=2\cos\theta\\
x_3&=2\cos3\theta.
\end{align*}
Then, by setting $0<\theta<\frac\pi6$ appropriately, the signature vector $[x_0,x_1,x_2,x_3]$ can be made to match $[1,a,a,1]$ up to a scaling factor.  
Following Cai et al., introduce the matrix 
$$T=\begin{bmatrix}\alpha_1&\beta_1\\\alpha_2&\beta_2\end{bmatrix}=\begin{bmatrix}e^{i\theta}&e^{-i\theta}\\e^{-i\theta}&e^{i\theta}\end{bmatrix},$$
and (in the compact notation used in work on holants) define $[y_0,y_1,y_2]=[1,0,1]T^{\otimes2}=[2\cos2\theta,2,2\cos2\theta]$.  This last identity can be expanded into a less compact but more generally accessible matrix equation:
$$
\begin{bmatrix}y_0&y_1\\y_1&y_2\end{bmatrix}=T^\intercal I_2 T=\begin{bmatrix}e^{i\theta}&e^{-i\theta}\\e^{-i\theta}&e^{i\theta}\end{bmatrix}\begin{bmatrix}e^{i\theta}&e^{-i\theta}\\e^{-i\theta}&e^{i\theta}\end{bmatrix}=
\begin{bmatrix}2\cos2\theta&2\\2&2\cos2\theta\end{bmatrix}.
$$
(See the exercise preceding Theorem 1.6 of~\cite{CaiChenBook}.)  So we have $y_0=y_2=2\cos\theta$ and $y_1=2$.  Since none of the four conditions involving $y_0,y_1,y_2$ in the statement of Lemma~\ref{lem:equiv} hold, it follows that $\Hol([1,a,a,1])$ is $\numP$-hard.  
\end{proof}

We can even restrict the graph $\Gamma$ to be planar, at the expense of introducing a non-zero field.

\begin{lemma}\label{lem:planar}
$\PlHol([1,ab,ab^2,b^3])$ is $\numP$-hard, for all $a\in\mathbb{Q}\cap(1,\infty)$ and $b\in\mathbb{Q}^+\setminus\{0,1\}$.
\end{lemma}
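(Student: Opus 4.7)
The plan is to extend the holographic-reduction argument of Lemma~\ref{lem:zerofield} to the asymmetric signature $[1,ab,ab^2,b^3]$, where now $x_0 = 1 \ne b^3 = x_3$. At a high level, I would: (i)~find a holographic matrix $T$ such that $[1,0,0,1]T^{\otimes 3}$ is proportional to $[1,ab,ab^2,b^3]$; (ii)~compute the induced binary signature $[y_0,y_1,y_2]=[1,0,1]T^{\otimes 2}$ and, using the fact that holographic reductions are local and hence preserve planarity, conclude that $\PlHol([1,ab,ab^2,b^3])$ is polynomial-time interreducible with $\PlHol([y_0,y_1,y_2]\mid[1,0,0,1])$; (iii)~invoke a planar analog of Lemma~\ref{lem:equiv} to conclude.

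For step~(i), the symmetric ansatz used in Lemma~\ref{lem:zerofield} has to be broken, since now $x_0 \ne x_3$. A natural replacement is $T = \left[\begin{smallmatrix} e^{i\theta} & e^{-i\theta} \\ be^{-i\theta} & be^{i\theta}\end{smallmatrix}\right]$; a direct calculation then gives $[1,0,0,1]T^{\otimes 3} = 2[\cos 3\theta,\,b\cos\theta,\,b^2\cos\theta,\,b^3\cos 3\theta]$, which is proportional to $[1,ab,ab^2,b^3]$ precisely when $\cos\theta = a\cos 3\theta$, equivalently $\cos^2\theta = (3a+1)/(4a)$. Since $a > 1$, this quantity lies in $(3/4,1)$, so a real solution $\theta \in (0,\pi/6)$ exists. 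Computing $T^\intercal I_2 T$ then yields $[y_0,y_1,y_2] = [e^{2i\theta}+b^2e^{-2i\theta},\,1+b^2,\,e^{-2i\theta}+b^2e^{2i\theta}]$.

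The four exceptional cases of Lemma~\ref{lem:equiv} can then be checked quickly. Case~(3) fails because $y_1 = 1+b^2 > 0$. Case~(4) fails because $y_0 = 0$ would force $|{-b^2}| = |e^{4i\theta}| = 1$, hence $b = 1$, excluded by hypothesis. Case~(1) reduces to $\cos 4\theta = 1$, impossible since $4\theta \in (0, 2\pi/3)$. And case~(2) requires $y_0 y_2 = -y_1^2 < 0$, which contradicts $y_0 y_2 = 1 + b^4 + 2b^2\cos 4\theta > 0$ for real~$b$.

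The main obstacle is step~(iii). Lemma~\ref{lem:equiv} addresses the general, non-planar holant problem, whereas here we need the planar version. In the planar setting one expects extra tractable classes arising from matchgate-realizability and the Fisher--Kasteleyn--Pfaffian method. The plan is to invoke a suitable planar holant dichotomy from the literature (for instance from~\cite{CaiChenBook}) and verify that the specific pair $\bigl([y_0,y_1,y_2],[1,0,0,1]\bigr)$ avoids every planar-specific tractable class. That final verification is where the real work lies.
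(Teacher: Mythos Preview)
Your approach is exactly the paper's: the same holographic reduction (your $T$ is a scalar multiple of the paper's, which uses $e^{\pm\lambda}$ with $b=e^{2\lambda}$), the same resulting binary signature $[y_0,y_1,y_2]$ up to that scalar, and the same verification of the four non-planar exceptional cases from Lemma~\ref{lem:equiv}.  What you correctly flag as the outstanding step~(iii) is resolved in the paper by citing the planar dichotomy of Kowalczyk and Cai~\cite[Thm~4.4]{KC16}, which adds precisely one further tractable case to the four in Lemma~\ref{lem:equiv}, namely $y_0^3=y_2^3$.

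This extra case is easy to exclude with your own computation.  Since $b$ is real, $y_2=\overline{y_0}$, so $y_0^3=y_2^3$ holds iff $\arg y_0\in\{0,\pm\pi/3\}$.  From $y_0=(1+b^2)\cos2\theta+i(1-b^2)\sin2\theta$ with $\theta\in(0,\pi/6)$ and $b\ne1$ you get $\mathrm{Re}\,y_0>0$, $\mathrm{Im}\,y_0\ne0$, and $|\tan\arg y_0|=\bigl|\tfrac{1-b^2}{1+b^2}\bigr|\tan2\theta<\tan(\pi/3)$, hence $0<|\arg y_0|<\pi/3$.  That finishes the argument.
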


The proof follows a similar line to the previous one, using the following dichotomy result for planar holants.    

\begin{lemma}  \label{lem:equivplanar} Kowalczyk and Cai~\cite[Thm~4.4]{KC16}.
The problem $\PlHol([y_0,y_1,y_2]\mid[1,0,0,1])$ is $\numP$-hard for all $y_0,y_1,y_2\in\mathbb{C}$, except in the four cases listed in the statement of Lemma~\ref{lem:equiv}, together with a fifth, namely (5)~$y_0^3=y_2^3$.  In these five cases, the problem is in $\mathrm{P}$. 
\end{lemma}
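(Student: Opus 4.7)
The plan is to adapt the proof of Lemma~\ref{lem:zerofield} to the planar setting by substituting the planar dichotomy Lemma~\ref{lem:equivplanar} for Lemma~\ref{lem:equiv}. The key new challenge is that Lemma~\ref{lem:equivplanar} has one additional exception, case~(5), namely $y_0^3=y_2^3$, which must be ruled out in the presence of a nonzero field.

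The first step is to set up a holographic transformation generalising the one used for Lemma~\ref{lem:zerofield}, now incorporating the field parameter~$b$. A convenient ansatz is $\alpha_1=e^{i\theta}$, $\beta_1=e^{-i\theta}$, $\alpha_2=be^{-i\theta}$, $\beta_2=be^{i\theta}$, which yields the decomposition
$$
[\alpha_1^3+\beta_1^3,\,\alpha_1^2\alpha_2+\beta_1^2\beta_2,\,\alpha_1\alpha_2^2+\beta_1\beta_2^2,\,\alpha_2^3+\beta_2^3] \;=\; [2\cos3\theta,\,2b\cos\theta,\,2b^2\cos\theta,\,2b^3\cos3\theta].
$$
This matches $[1,ab,ab^2,b^3]$ up to a positive scalar precisely when $a=\cos\theta/\cos3\theta$. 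Since $\theta\mapsto\cos\theta/\cos3\theta$ is a continuous bijection from $(0,\pi/6)$ onto $(1,\infty)$, there is a unique such $\theta\in(0,\pi/6)$ for each rational $a>1$.

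Next, following the pattern of the previous proof, I would form $T=\bigl[\begin{smallmatrix}\alpha_1&\beta_1\\ \alpha_2&\beta_2\end{smallmatrix}\bigr]$ and compute $[y_0,y_1,y_2]=[1,0,1]T^{\otimes2}$ via $T^\intercal I_2 T$, obtaining $y_0=e^{2i\theta}+b^2e^{-2i\theta}$, $y_1=1+b^2$ and $y_2=e^{-2i\theta}+b^2e^{2i\theta}=\overline{y_0}$. The heart of the proof is then verifying that none of the five exceptional cases of Lemma~\ref{lem:equivplanar} hold. Cases (1)--(4) are routine positivity checks: $y_1=1+b^2>0$ eliminates (3); $y_0y_2=|y_0|^2>0$ eliminates (2) and (4); and $y_1^2-y_0y_2=4b^2\sin^22\theta>0$ eliminates (1).

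The main obstacle is case~(5). Using $y_2=\overline{y_0}$ and factoring $y_0^3-y_2^3=(y_0-y_2)(y_0^2+y_0y_2+y_2^2)$, the first factor is nonzero because $b\neq1$ and $\theta\notin\{0,\pi/2\}$. Expanding the second factor in terms of the real and imaginary parts of~$y_0$ reduces case~(5) to the identity
$$
\tan^2 2\theta \;=\; \frac{3(1+b^2)^2}{(1-b^2)^2}.
$$
The right-hand side is strictly greater than~$3$ since $(1+b^2)^2-(1-b^2)^2=4b^2>0$, while the left-hand side satisfies $\tan^22\theta<3$ because $\theta\in(0,\pi/6)$ forces $2\theta\in(0,\pi/3)$. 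Thus case~(5) is also excluded, and Lemma~\ref{lem:equivplanar} delivers $\numP$-hardness of $\PlHol([y_0,y_1,y_2]\mid[1,0,0,1])$, which by the interreducibility supplied by the holographic transformation yields $\numP$-hardness of $\PlHol([1,ab,ab^2,b^3])$. The hypotheses $a>1$ and $b\neq1$ in the lemma statement are precisely what drive, respectively, the bound $\tan^22\theta<3$ and the nonvanishing of $y_0-y_2$, so the restrictions on $a,b$ are natural rather than accidental.
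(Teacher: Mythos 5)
Note first that the statement you were asked about, Lemma~\ref{lem:equivplanar}, is the Kowalczyk--Cai dichotomy, which the paper simply cites from \cite{KC16} and does not prove; what you have actually written is a proof of Lemma~\ref{lem:planar}, which uses that dichotomy as a black box. Judged as such, your argument is correct and follows essentially the same route as the paper's proof of Lemma~\ref{lem:planar}: you use the same Cai--Huang--Lu holographic transformation (your $T$ is the paper's $T$ rescaled by $b^{1/2}$ under the substitution $b=e^{2\lambda}$, so your $[y_0,y_1,y_2]$ agrees with the paper's up to the positive factor $b$), and the same elimination of the five tractable cases --- your explicit factorisation of $y_0^3-y_2^3$ together with the bound $\tan^2 2\theta<3<3(1+b^2)^2/(1-b^2)^2$ is just a computational rendering of the paper's observation that $\arg y_0=-\arg y_2\neq0$ and $|\arg y_0|<\pi/3$. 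The only step left implicit is that $y_0\neq0$ (needed for $y_0y_2=|y_0|^2>0$), which is immediate since $\operatorname{Re}y_0=(1+b^2)\cos2\theta>0$ for $\theta\in(0,\pi/6)$.
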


Note that in the statement of this result in~\cite[Thm~4.4]{KC16}, $\mathrm{Hol}(a,b)$ is shorthand for 
$\Hol([a,1,b]\mid[1,0,0,1])$.

\begin{proof} [Proof of Lemma~\ref{lem:planar}]
Define $\alpha_1=e^{-\lambda+i\theta}$, $\beta_1=e^{-\lambda-i\theta}$, $\alpha_2=e^{\lambda-i\theta}$ and $\beta_2=e^{\lambda+i\theta}$;  then define   
\begin{align*}
x_0&=\alpha_1^3+\beta_1^3=2e^{-3\lambda}\cos3\theta\\
x_1&=\alpha_1^2\alpha_2+\beta_1^2\beta_2=2e^{-\lambda}\cos\theta\\
x_2&=\alpha_1\alpha_2^2+\beta_1\beta_2^2=2e^{\lambda}\cos\theta\\
x_3&=\alpha_2^3+\beta_2^3=2e^{3\lambda}\cos3\theta.
\end{align*}
By setting $0<\theta<\frac\pi6$ and $\lambda\in\mathbb{R}$ appropriately, the signature vector $[x_0,x_1,x_2,x_3]$ can be made to match $[1,ab,ab^2,b^3]$ up to a scaling factor.  
Again define 
$$T=\begin{bmatrix}\alpha_1&\beta_1\\\alpha_2&\beta_2\end{bmatrix}=\begin{bmatrix}e^{-\lambda+i\theta}&e^{-\lambda-i\theta}\\e^{\lambda-i\theta}&e^{\lambda+i\theta}\end{bmatrix},$$
and 
$$[y_0,y_1,y_2]=[1,0,1]T^{\otimes2}=[e^{-2\lambda+2i\theta}+e^{2\lambda-2i\theta},\>e^{2\lambda}+e^{-2\lambda},\>e^{2\lambda+2i\theta}+e^{-2\lambda-2i\theta}].$$  
To establish $\numP$-hardness for planar graphs, we just need to verify that none of the five conditions in Lemma~\ref{lem:equivplanar} hold.  To this end, observe that 
$$y_0y_2=2\cosh4\lambda+2\cos4\theta\quad\text{and}\quad y_1^2=2\cosh4\lambda+2.$$
Noting that $\cosh4\lambda\geq1$ and $0<\theta<\frac\pi6$, we find that $0<y_0y_2<y_1^2$.  Also, $y_2$ is the complex conjugate of $y_0$ and so $|y_0|=|y_2|<|y_1|$.  These facts rule out the four polynomial-time cases (1)--(4) listed in the statement of Lemma~\ref{lem:equivplanar}.  The fifth condition is ruled out by $\arg y_0 =-\arg y_2\not=0$ and $-\frac\pi3<\arg y_0,\arg y_2<\frac\pi3$.  It follows that $\PlHol([1,ab,ab^2,b^3])$ is \#P-hard.
\end{proof}

Note that the final step of the proof crucially uses the fact that $\lambda\not=0$, which is equivalent to $b\not=1$.  Indeed, the partition function can be evaluated in polynomial time on planar graphs in the zero field case, using the algorithm of Fisher and Kasteleyn mentioned earlier.  Lemma~\ref{lem:planar} can be interpreted as asserting that exactly computing the partition function of the Ising model with a constant non-zero external field is hard even when restricted to line graphs of cubic planar graphs. An example of such a graph is the kagome lattice with toroidal boundary conditions, as it is the line graph of the hexagonal lattice with the same boundary conditions.

\section*{Acknowledgements}
MJ thanks Heng Guo, Alan Sokal and Eric Vigoda for helpful remarks.

\bibliographystyle{plain}
\bibliography{antiferroIsing}

\end{document}